\documentclass[a4paper,11pt]{amsart}

\usepackage{url}
\usepackage{cite}

\usepackage{amssymb}
\usepackage{color}

\usepackage[normalem]{ulem}
\usepackage[utf8]{inputenc}
\usepackage{graphicx}

\usepackage{soul}
\usepackage[normalem]{ulem}

\allowdisplaybreaks

\def\undersetbrace#1\to#2{\underbrace{#2}_{#1}}
\def\oversetbrace#1\to#2{\overbrace{#2}^{#1}}
\def\AMSunderset#1\to#2{\underset{#1}{#2}}
\def\AMSoverset#1\to#2{\overset{#1}{#2}}

\newcommand{\nmb}[2]{\ifx!#1{\ref{nmb:#2}}%
\else\if.#1{\label{nmb:#2}}%
\else\if0#1{\label{nmb:#2}}%
\else{{#2}}%
\fi\fi\fi
}

\parskip .1in
\def\o{\circ}
\def\al{\alpha}
\def\ga{\gamma}

\def\ep{\varepsilon}

\def\la{\lambda}
\def\rh{\rho}

\def\om{\omega}
\def\Ga{\Gamma}

\def\i{^{-1}}
\def\x{\times}

\let\on=\operatorname

\newcommand{\tlinpsik}{\tilde\linpsi_I}
  \newcommand{\chihere}{\red{\psi}}

 \newcommand{\btheta}{\blue{\underline{\theta}}}
 \newcommand{\bvarphi}{\blue{\varphi}}
\newcommand{\scalarfield}{\blue{ {\Phi}}}

\newcommand{\zchi}{{\mathring{\chi}}}
\newcommand{\zomega}{{\mathring{\omega}}}
\newcommand{\zu}{{\mathring{u}}}
\newcommand{\PV}{P_{\mathcal V}}
\newcommand{\Pzc}{\mathrm{P}_{{\zu }}}
\newcommand{\Pzperp}{\mathrm{P}_{{\zu }^\perp}}

\newcommand{\spacenabla}{{D}}

\newcommand{\emm}{\red{m}}

\newcommand{\linpsi}{\red{\psi}}
\newcommand{\linphi}{\red{\phi}}

\newcommand{\Vpp}{\blue{  G'(0)} }

\newcommand{\pVpp}{\red{+V^2\Vpp}}

\newcommand{\elambda}{\red{\sigma}}

\newcommand{\zh}{{\mathring{h}}}
\newcommand{\zP}{{\mathring{P}}}
\newcommand{\zglor}{{\mathring{\glorentz}}}
\newcommand{\lambdaell}{\red{\lambda_I}}
\newcommand{\varphiell}{\red{\varphi_I}}
\newcommand{\linpsiell}{\red{\linpsi_I}}

\newcommand{\bean}{\begin{eqnarray}\nn}

\newcommand{\nn}{\nonumber}

 \newcommand{\mcV}{{\mycal V}}

\newcommand{\zg }{{ \mathring{g}}}
\newcommand{\zV }{{ \mathring{V}}}

\newcommand{\glorentz}{{ {\mathbf g}}}
\newcommand{\nablariem}{{ {D}}}

\newcommand{\griem}{{ {\mathfrak g}}}

\newcommand{\loc}{\red{\mathrm{loc}}}

\newcommand{\blue}[1]{{\color{blue} #1}}
\newcommand{\red}[1]{{\color{red} #1}}

\DeclareFontFamily{OT1}{rsfs}{}
\DeclareFontShape{OT1}{rsfs}{m}{n}{ <-7> rsfs5 <7-10> rsfs7 <10->
rsfs10}{} \DeclareMathAlphabet{\mathscr}{OT1}{rsfs}{m}{n}

%
%
%
%
%

\newcommand{\eq}[1]{\eqref{#1}}

\newcommand{\bel}[1]{\begin{equation}\label{#1}}
\newcommand{\beal}[1]{\begin{eqnarray}\label{#1}}
\newcommand{\beadl}[1]{\begin{deqarr}\label{#1}}
\newcommand{\eeadl}[1]{\arrlabel{#1}\end{deqarr}}
\newcommand{\eeal}[1]{\label{#1}\end{eqnarray}}
\newcommand{\eead}[1]{\end{deqarr}}
\newcommand{\eea}{\end{eqnarray}}
\newcommand{\eeaa}{\end{eqnarray*}}

\newcommand{\be}{\begin{equation}}
\newcommand{\ee}{\end{equation}}

\DeclareFontFamily{OT1}{rsfs}{}
\DeclareFontShape{OT1}{rsfs}{m}{n}{ <-7> rsfs5 <7-10> rsfs7 <10->
rsfs10}{} \DeclareMathAlphabet{\mycal}{OT1}{rsfs}{m}{n}

\newcommand{\mcL}{{\mycal L}}
\newcounter{mnotecount}[section]

\newcommand{\N}{{\mathbb N}}
\newcommand{\Z}{{\mathbb Z}}

%
%

\newcommand{\rmnote}[1]{}

%
%
%
%

\def\mysavedown#1{\edef\mysubs{\mysubs#1}}
\def\mysaveup#1{\edef\mysups{\mysups#1}}
\def\mydown#1{{\mytensor}_{\vphantom{\mysubs}#1}}
\def\myup#1{{\mytensor}^{\vphantom{\mysups}#1}}
\def\tensor#1#2{
  #1
  \def\mytensor{\vphantom{#1}}
  \def\mysubs{\relax}
  \def\mysups{\relax}
  \let\down=\mysavedown
  \let\up=\mysaveup
  #2
  \let\down=\mydown
  \let\up=\myup
  #2
  }




\newcommand{\C}{\mathbb C}
\newcommand{\R}{\mathbb R}

\renewcommand{\setminus}{\smallsetminus}

\renewcommand{\to}{\rightarrow}

\renewcommand{\epsilon}{\varepsilon}
\renewcommand{\hat}{\widehat}

\def\crn#1#2{{\vcenter{\vbox{
        \hbox{\kern#2pt \vrule width.#2pt height#1pt
           }
          \hrule height.#2pt}}}}


\newcommand{\newF}{\lambda}

\renewcommand{\hbar}{{\overline h}}

\newcommand{\pre}[2]{{{\vphantom{#2}}^{#1}}\kern-.2ex{#2}}

\sloppy \raggedbottom

\theoremstyle{plain}
\newtheorem{theorem}{\sc Theorem}[section]
\newtheorem{assumptions}[theorem]{\sc Assumptions}

\newtheorem{proposition}[theorem]{\sc Proposition}
\newtheorem{corollary}[theorem] {\sc Corollary}
\newtheorem{question}[theorem] {\sc Question}

\newtheorem*{proposition*}{Proposition}
\newtheorem*{theorem*}{Theorem}
\newtheorem*{lemma*}{Lemma}
\newtheorem*{corollary*}{Corollary}

\theoremstyle{definition}

\newtheorem{Remark}[theorem]{\sc  Remark\rm}
\newtheorem{remark}[theorem]{\sc  Remark\rm}

\numberwithin{equation}{section}

\date{\today}

\renewcommand{\blue}[1]{#1}
\renewcommand{\red}[1]{#1}

\begin{document}

\title[Stationary  spacetimes with negative $\Lambda$] {Non-singular spacetimes with a negative cosmological constant:  V.  Boson stars}
\thanks{Preprint UWThPh-2017-23}

\author[Chru\'sciel]{Piotr T.~Chru\'sciel}

\address{Piotr
T.~Chru\'sciel, Erwin Schr\"odinger Institute and Faculty of Physics, University of Vienna, Boltzmanngasse 5, A1090 Wien, Austria}
\email{piotr.chrusciel@univie.ac.at} \urladdr{http://homepage.univie.ac.at/piotr.chrusciel/}

\author[Delay]{Erwann
Delay} \address{Erwann Delay, Universit\'e d'Avignon, Laboratoire de Math\'ematiques d'Avignon (EA 2151), 301 rue Baruch de Spinoza,
F-84916 Avignon, France}
\email{Erwann.Delay@univ-avignon.fr}
\urladdr{http://www.math.univ-avignon.fr/}

\author[Klinger]{Paul Klinger}

\address{Paul Klinger,  Faculty of Physics and Erwin Schr\"odinger Institute, University of Vienna, Boltzmanngasse 5, A1090 Wien, Austria}
\email{paul.klinger@univie.ac.at}

\author[Kriegl]{with an Appendix by Andreas Kriegl}
\author[Michor]{Peter W. Michor}
\author[Rainer]{Armin Rainer}
  \address{A.~Kriegl, P.W.~Michor, A.~Rainer: Fakult\"at f\"ur Mathematik, Universit\"at Wien,
  Oskar-Morgenstern-Platz~1, A-1090 Wien, Austria}
  \email{andreas.kriegl@univie.ac.at}

  \email{peter.michor@univie.ac.at}

  \email{armin.rainer@univie.ac.at}

\begin{abstract}
We prove existence of large families of solutions of Einstein-complex scalar field equations with a negative cosmological constant, with a stationary or static metric and a time-periodic complex scalar field.
\end{abstract}

\maketitle

\tableofcontents
\section{Introduction}
 \label{section:intro}

There is currently considerable interest in the literature in space-times with a negative cosmological constant $\Lambda$.
As a contribution to this, in two recent papers~\cite{ChDelayEM,ChDelayKlinger} we have provided proofs of existence of infinite dimensional families of  non-singular \emph{strictly stationary}
space times, solutions of the Einstein  equations with a negative cosmological constant and with various matter sources. The families of solutions constructed in~\cite{ChDelayKlinger}   include stationary metrics with a time-periodic complex scalar field $\scalarfield$, which are often referred to as \emph{boson stars}.  (By ``strictly stationary'' we mean that the Killing vector is timelike everywhere.) The Einstein-complex scalar field solutions we constructed can (but do not need to) have the usual AdS conformal structure at conformal infinity, and are driven by the asymptotic value of the scalar field, after requiring that the scalar-field potential $ G(|\scalarfield |^2)$ satisfies 
\bel{5XII16.2+}
    - {n^2} <4 \underbrace{\frac{ n(n-1)}{2|\Lambda|}}_{=:\ell^2}\Vpp < 0
 \,,
\ee
in space-time dimension $n+1$.
This might be seen as undesirable, since massive or massless linear scalar fields do not fulfill \eq{5XII16.2+}. The object of this work is to show how to modify the arguments in~\cite{ChDelayKlinger} to construct non-trivial boson star solutions for linear or nonlinear scalar fields with finite total energy of the field and with
``mass-squared parameter''%
\footnote{In the linear case, in which we have $G'(|\scalarfield |^2)\equiv \Vpp$ for all $\scalarfield$, when $\Vpp\ge0 $ the parameter  $\Vpp$ is usually identified with the square of the mass of the field.}
$\Vpp $ in the range \eq{5XII16.2+a}
and, if desired, usual conformal structure at the conformal boundary at infinity.

For definiteness we consider the Einstein equations involving a complex scalar field with a potential $ G(|\scalarfield |^2)$, so that the equation to be satisfied by $\scalarfield $ reads
\bel{17V17.11}
   \nabla^\mu  \nabla_\mu  \scalarfield  -  G'(|\scalarfield |^2)
    \scalarfield = 0
 \,,
\ee
and we note that the arguments here easily extend to include  contributions from further matter models as in \cite{ChDelayKlinger}.
We prove the following, where we normalise $\Lambda$ as in \eq{1VII17.1} below:

\begin{theorem}
 \label{T29VI17.1}
Let
$$
 \mathring \glorentz:=-\mathring V^2 dt^2
 + \mathring g_{ij}dx^i dx^j
$$
be a static, vacuum,  $C^{2}$-conformally compactifiable $(n+1)$-dimensional metric with $\mathring V>0$ such that
the associated operator $\Delta_L+2n$ (see \eq{30VI17.1} below) has no kernel in $L^2$. Let $\mathring\linpsi\not   \equiv0$ and $\mathring \omega\in \R^*$
solve the eigenvalue equation \eq{8I17.1}, and assume that the associated eigenspace is one-dimensional.
If
\bel{5XII16.2+a}
    - {n^2} < 4\Vpp
 \,,
\ee
then  for all time-independent $ \mathring \theta_i dx^i$ small enough
in $C^{2,\alpha}_1$ and  for all $\elambda\in\C$ with modulus small enough
there exists a time-independent metric
$$-V^2(dt+\theta_idx^i)^2 + g_{ij}dx^i dx^j
$$
near to and asymptotic to
 $ -\mathring V^2
 (dt + \mathring \theta_i dx^i)^2
 + \mathring g_{ij}dx^i dx^j$, solution of the Einstein-complex scalar field equations with $\scalarfield $ of the form
%
\bel{3II17.1}
 \scalarfield (t,x) = \elambda 
 e ^{i  \omega  t  } \chi(  x)
 \,,
\ee
where $x$ denotes space-variables, with $\chi$ decaying to zero at the conformal boundary, with   $\omega$ close to $\mathring \omega$
and $\chi$ close to $\mathring \linpsi$.
\end{theorem}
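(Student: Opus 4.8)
The plan is to set up the coupled Einstein--complex-scalar-field system as a perturbation of the static vacuum background $\mathring\glorentz$, treating the scalar amplitude $\elambda\in\C$ and the asymptotic twist $\mathring\theta_i dx^i$ as the small parameters, and to solve it by the implicit function theorem. Writing the sought metric in the stationary form $-V^2(dt+\theta_i dx^i)^2 + g_{ij}dx^i dx^j$ and the field as in \eq{3II17.1}, one first notes that the stress-energy tensor of $\scalarfield$ depends on $(t,x)$ only through $|\elambda|^2$ and $x$, so that a time-independent metric is consistent; this is the standard boson-star ansatz. After fixing a gauge that renders the reduced Einstein equations elliptic (as in \cite{ChDelayKlinger}), the unknowns are the metric quantities $(g,V,\theta)$, the spatial profile $\chi$, and the frequency $\omega$. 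The system then consists of the gauge-fixed Einstein equations with a source quadratic in $\elambda$, together with the scalar equation \eq{17V17.11}, which after dividing by $\elambda e^{i\omega t}$ becomes a linear equation $L_g\chi = 0$ for $\chi$ on the metric $g$.

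At $\elambda = 0$ and $\mathring\theta = 0$ the quadratic source vanishes, the Einstein equations are solved by the background $\mathring\glorentz$, and $L_{\mathring g}\chi = 0$ is precisely the eigenvalue equation \eq{8I17.1}, solved by $(\mathring\linpsi,\mathring\omega)$; in particular the linearized Einstein and scalar equations decouple at this point. I would linearize the whole system here. The linearization of the gauge-fixed Einstein operator in the metric directions is $\DeltaL + 2n$ (see \eq{30VI17.1}); by the no-$L^2$-kernel hypothesis and the Fredholm theory for such operators on conformally compact manifolds developed in the earlier papers of this series and in the Appendix, this operator is an isomorphism between appropriate weighted Hölder spaces $C^{2,\alpha}_\delta$. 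The prescribed twist $\mathring\theta\in C^{2,\alpha}_1$ enters as an inhomogeneity at this stage and, for small data, produces a metric correction of the same order.

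The scalar part requires a Lyapunov--Schmidt reduction, since $L_{\mathring g}$ has the one-dimensional kernel spanned by $\mathring\linpsi$. Exploiting the scaling freedom of the homogeneous equation $L_g\chi=0$, I would normalise $\chi$ by fixing its component along $\mathring\linpsi$, solve the projection of $L_g\chi=0$ onto the range of $L_{\mathring g}$ for the correction $\chi-\mathring\linpsi$ lying in the orthogonal complement of $\mathring\linpsi$, and use the frequency $\omega$ as the single free parameter to annihilate the projection onto the kernel. This is the usual simple-eigenvalue continuation: as $g$ moves away from $\mathring g$ under the quadratic back-reaction, $\omega$ adjusts near $\mathring\omega$ so that $\mathring\linpsi$ remains an approximate eigenfunction, the one-dimensionality of the eigenspace guaranteeing non-degeneracy of the resulting bifurcation equation.

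The crucial step, and the one where \eq{5XII16.2+a} is used, is the functional-analytic setup for $\chi$ in the mass range $-n^2 < 4\Vpp$. The indicial roots of the scalar operator at the conformal boundary are $\tfrac{n}{2}\pm\sqrt{\tfrac{n^2}{4}+\Vpp}$; condition \eq{5XII16.2+a} is exactly the Breitenlohner--Freedman bound making these roots real, while the removal of the upper bound $4\Vpp<0$ present in \eq{5XII16.2+} allows $\Vpp\ge 0$ and forces the selection of the faster-decaying mode, giving a profile $\chi$ that decays to zero at the conformal boundary with finite field energy. I would therefore place $\chi$ in a weighted space selecting this $\Delta_+$-mode (a Dirichlet-type condition), which renders $L_g$ an isomorphism on the complement of its kernel and, because $\chi$ then decays strictly faster than the metric perturbations, guarantees that the quadratic stress-energy lands in the weighted space on which $\DeltaL+2n$ was inverted. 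With both linearized blocks invertible, up to the one-dimensional kernel absorbed by $\omega$, the full system is recast as a fixed-point equation for $(g,V,\theta,\chi,\omega)$ depending on $(\elambda,\mathring\theta)$; for $|\elambda|$ and $\|\mathring\theta\|_{C^{2,\alpha}_1}$ small enough the nonlinear remainder is a contraction on a small ball, and the implicit function theorem yields the asserted solution with $\omega$ close to $\mathring\omega$ and $\chi$ close to $\mathring\linpsi$. I expect the main obstacle to be precisely this matching of spaces in the extended range of $\Vpp$: simultaneously selecting the decaying scalar mode, securing the Fredholm structure underlying the Lyapunov--Schmidt projection, and ensuring compatibility of the quadratic source with the weighted space in which $\DeltaL+2n$ is an isomorphism.
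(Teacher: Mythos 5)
Your overall strategy --- a Lyapunov--Schmidt reduction at the simple eigenvalue $(\mathring\linpsi,\mathring\omega)$ for the scalar sector, combined with an implicit-function-theorem argument for the gauge-reduced Einstein equations based on the invertibility of $\Delta_L+2n$ in weighted spaces, with \eqref{5XII16.2+a} read as the Breitenlohner--Freedman bound selecting the $\rho^{(n+s)/2}$ root, $s=\sqrt{4\Vpp+n^2}$ --- is essentially the paper's, up to reorganisation: the paper first constructs a differentiable map $(g,V,\theta)\mapsto(\chi,\omega)$ and only then solves \eqref{mainequations_EMD_periodicscalar} by the implicit function theorem, rather than running a single fixed-point argument on the full coupled system; both organisations are legitimate.

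The genuine gap is the step you yourself flag as ``the main obstacle'': the assertion that the scalar operator is an isomorphism on the orthogonal complement of $\mathring\linpsi$, i.e.\ the Fredholm structure underlying the reduction. This is not automatic and is where the paper does its real analytic work. On an asymptotically hyperbolic manifold $\Delta_g$ has a half-line of essential spectrum and no compact resolvent on $L^2$, so for a generic Schr\"odinger-type operator the eigenvalue $-\mathring\omega^2$ could be embedded in continuous spectrum, in which case the restriction to $\mathring\linpsi^\perp$ would not be boundedly invertible and the bifurcation equation for $\omega$ could not be closed. The paper's key observation (Proposition~\ref{propresoP} and Corollary~\ref{C5VII17.1}) is that the relevant operator is not $\Delta_g$ but, up to conjugation by $V^{1/2}$, the operator $V^2(P_{\mathcal V}-\Vpp)$: the factor $V^2\sim\rho^{-2}$ multiplying the principal part forces the domain into $H^2_{\delta'}$ with $\delta-2\le\delta'<\delta$ and $\delta'^2<s^2/4$, which embeds compactly into $L^2_\delta$; hence the resolvent is compact, the spectrum is discrete, and $-\mathring\omega^2$ is isolated. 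Two further points you omit but which must be checked: the bifurcation equation \eqref{22VIIl17.2} determines the single \emph{real} unknown $\delta\omega$ from an a priori complex right-hand side, and one has to verify (via self-adjointness and an integration by parts) that this right-hand side is real; and the differentiable dependence of the simple eigenpair on $(V,g)$, needed to feed $(\chi,\omega)$ into the Einstein block, is itself a nontrivial perturbation statement for unbounded, non-self-adjoint operators, supplied by Theorem~\ref{TA1VIII17.1} of the Appendix.
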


The proof of Theorem~\ref{T29VI17.1} is to be found in Section~\ref{s11VII17.1}.

The asymptotic behaviour of the solutions can be described precisely, see Remark~\ref{R28VII17.1} and  compare Section 7 of  \cite{ChDelayKlinger}. The solutions have complete asymptotic expansions in terms of (possibly non-integer) powers of $\rho$ and $\ln \rho$,
 with the following behaviour
in local coordinates near the conformal boundary $\partial M$
\bel{result0} V=\mathring
 V+o(\rho^{-1})\,,\quad \theta_i=\mathring \theta_i +o(1) ,\quad
 g_{ij}=\mathring g_{ij} +o(\rho^{-2})\,,
\ee
where $\rho$ is a coordinate which vanishes precisely at the conformal boundary. In fact, the deviation of the metric from the corresponding vacuum solution is determined by the asymptotic behaviour of the scalar field, which in the current case is
\bel{28VII17.1a}
 \scalarfield  = O(\rho^{(n+\sqrt{4 \ell^2 \Vpp + n^2})/2})
  \,.
\ee
This should be contrasted with the boson stars constructed in \cite{ChDelayKlinger}, where any small frequency $\omega$ is allowed, but
\bel{28VII17.1b}
 \scalarfield  = O(\rho^{(n-\sqrt{4 \ell^2 \Vpp + n^2} )/2})
  \,.
\ee
(The decay rates \eqref{28VII17.1a}-\eqref{28VII17.1b} are  sometimes associated with \emph{Dirichlet} or with \emph{Neumann} boundary conditions on $\scalarfield$  in the physics literature. From this point of view our solutions have zero Dirichlet data.) 
The requirement that $\scalarfield$ tends to zero at the conformal boundary led to the already-mentioned restriction \eqref{5XII16.2+} on the potential, which does not arise with the asymptotics \eqref{28VII17.1a}.

The energy-momentum tensor decays as
\bel{18XII16.11+}
           O(\rho^{n+\sqrt{4 \ell^2 \Vpp + n^2}})
            \,,
\ee
which gives a finite total energy of the scalar field.  

The condition on the kernel of $\Delta_L+2n$ is satisfied by the anti-de Sitter metric, or by small perturbations thereof constructed in~\cite{ACD2,ACD,ChDelayStationary}. We solve explicitly the eigenvalue equation for $\mathring \linpsi$ in this case, and check that some solutions have the required properties.

In Section~\ref{ss11II17.1} we establish a similar existence result near the anti-de Sitter metric for solutions of the form
\bel{11II17.1b}
 \scalarfield(r,\theta,\bvarphi ) =  \elambda e ^{i (\omega(\sigma) t-\emm \bvarphi)} \chi(\sigma,r,\theta)
 \,,
\ee
with $\Z\ni \emm\ne 0$, with $V$, $g$ and $\theta$ invariant under rotations of the azimuthal angle $\bvarphi $, with $\chi$ near $\mathring \linpsi\not\equiv 0$.

Our analysis is based on the observation that time-periodic solutions which decay to zero at infinity can be associated with eigenfunctions of the elliptic operator resulting from the linearisation of the scalar field equation.  The associated eigenvalues determine the allowed frequencies. We use eigenfunctions in anti-de Sitter space-time as a seed to construct one-parameter families of solutions.  We prove compactness of the associated resolvent, which guarantees a discrete frequency  spectrum at constant $\elambda$ in whole generality.

 In an appendix we generalise a result in \cite{KrieglMichor2} to establish differentiability of eigenvalues and eigenfunctions with respect to the metric, which allows us to use the implicit function theorem to prove existence of the desired solutions.

We note that our methods do not seem to easily generalise to black hole configurations with a periodic complex scalar field and stationary geometry, as constructed numerically in~\cite{DiasHorowitzSantos}.

The reader is referred to~\cite{Kaup,BizonWasserman,ChodoshSR,HerdeiroRaduKerr,LieblingPalenzuela} for boson stars and black holes with $\Lambda=0$.

 Once this work was finished we have been informed that the solutions, the existence of which has been proved here,  have been numerically constructed in~\cite{BrihayeHarmannRiedel}. We note that the numerical results there provide evidence for existence of ``large solutions'', beyond the implicit-function-theorem regime considered here.

\section{Properties of solutions near the anti-de Sitter metric: a summary}

The results here give a rigorous proof of existence of the near-AdS subset of the numerical solutions of~\cite{AstefaneseiRadu}, and provide many new near-AdS boson stars in all dimensions.
The resulting solutions near the anti-de Sitter metrics form a countable family of one-parameter solutions with finite total energy. More precisely, there is a discrete family of frequencies $\mathring \omega_{k,K}$ (cf.\ \eq{2II17.11} below), $k,K\in \N$,
tending to infinity as $k$ and $K$ tend  to infinity, such that, for each given sufficiently small $
\mathring \theta$ (possibly zero):

\begin{enumerate}
  \item Near the lowest frequency $\mathring \omega_{0,0}$ we obtain a family of solutions  parameterised by $\elambda$.
  \item When $\mathring \theta$ is spherically symmetric, near  each higher frequency $\mathring \omega_{0, K}$, $K\in \N$, we obtain a family of spherically symmetric solutions parameterised by $\elambda$.  The metrics are time-independent, and are static
      if $|\mathring V\mathring \theta|_{\mathring g} $ vanishes at the conformal boundary at infinity.
  \item When $\mathring \theta$ is axially symmetric, near each higher frequency $\mathring \omega_{k,0}$ and $\mathring \omega_{k,1}$  we obtain a family of stationary axially symmetric metrics parameterised by $\elambda$, with a field of the form \eqref{11II17.1b} with $|\emm|=k$   in dimensions $n\ge 3$,  with the value $|\emm|=k-1$ also being allowed in dimension $n=3$.
\end{enumerate}

In this list $|\elambda|$ is of course assumed to be small.
 Keeping in mind that we are assuming in this section that $-\mathring V^2 dt^2 + \mathring g$ is the anti-de Sitter metric,
  the  solutions are 
uniquely determined by $|\elambda|$ and the restriction of $ \mathring \theta$ to the conformal boundary, after compensating the phase of $\elambda$ by a shift of the time variable. Uniqueness holds within the class of solutions near the anti-de Sitter solution, as guaranteed by the implicit function theorem.

\section{Boson stars}

We recall that a Riemannian manifold $(M,g)$ is $C^2$ conformally compactifiable if $M$ is the interior of a smooth compact manifold $\bar M$ with non-empty boundary $\partial M:= \bar M\setminus M$, and if there exists a smooth function $\Omega$ on $\bar M$ which vanishes precisely on the boundary $\partial M$, with no critical points there, such that $\Omega^{-2}g$ extends to a $C^2$ Riemannian metric on $\bar M$.

\subsection{The argument}
 \label{s11VII17.1}
 
We wish to prove existence of continuous families of boson stars, i.e., solutions of the Einstein-scalar field equations with a complex scalar field $\scalarfield $ of the form \eq{3II17.1} where
$\elambda$ is a complex constant varying over a neighborhood of the origin in the complex plane.
The metrics  we construct will take the form
\beal{gme1} &\glorentz  = -V^2(dt+\underbrace{\theta_i
dx^i}_{=\theta})^2 + \underbrace{g_{ij}dx^i dx^j}_{=g}\,, & \\ &
\partial_t V = \partial_t \theta = \partial_t g=0
\eeal{gme2}
(thus, $\mcL_{\partial_t}V= \mcL_{\partial_t}\theta = \mcL_{\partial_t} g=0$),
and will be near a metric of the form
\bel{17V17.2}
 \mathring \glorentz:= -\mathring V^2 dt^2 + \mathring g \equiv -\mathring V^2 dt^2 + \mathring V^{-2} dr^2 + r^2 \underbrace{\mathring h_{AB} (x^c) dx^A dx^B} _{=: \mathring h}
 \,,
\ee
where $\mathring V $ depends only upon $r$. We note, however, that the construction below works near any stationary solution satisfying the non-degeneracy and one-dimensional-kernel conditions spelled-out in Theorem~\ref{T29VI17.1}
whenever a non-trivial seed pair $(\mathring \psi,\mathring \omega)$ solving \eqref{matterequations2_periodicscalar} at $\mathring \glorentz$ is available.

Let $\Delta_L$ be
the  Lichnerowicz Laplacian  acting on  symmetric
two-tensor fields $u$,  defined
as~\cite[\S~1.143]{besse:einstein}
\bel{30VI17.1}
 \Delta_L
  u_{ij}=-D^kD_ku_{ij}+R_{ik}u^k{_j}+R_{jk}u^k{_i}-2R_{ikjl}u^{kl}\,.
\ee
We will assume that the operator $\Delta_L + 2n$
associated with the Riemannian metric
$$
 \mathring\griem:= \mathring V^2 dt^2 + \mathring g
$$
has no $L^2$-kernel; in such cases the metric $\mathring\griem$ is called \emph{non-degenerate}. Large classes of non-degenerate Einstein
metrics are described in~\cite{Lee:fredholm,ACD2,mand1,mand2}.

Substituting \eq{3II17.1} into \eq{17V17.11}, one obtains
\begin{equation}\label{26VII17.1}
\renewcommand{\arraystretch}{2}
\begin{split}
&
\big(
V D_k (Vg^{kj}\partial_j)
	- V^2 G'(|\elambda\chi|^2)\big)\chi
\\
&\qquad
+(1 - V^{2} \theta_k\theta^k) \omega^2 \chi
-
i\omega V^2 (\theta^j \partial_j\chi+V^{-1}D_j(V\theta^j \chi))=0\,.
\end{split}
\end{equation}
Denoting by $\linpsi$ the linearised counterpart of $\chi$, the linearisation of \eqref{26VII17.1} reads%
\footnote{Our notation is:  $\scalarfield $ for the possibly nonlinear scalar field, $\chi$ for the time-independent part thereof, $\linphi$ for the linearised scalar field,  and finally $\linpsi$ for the linearised counterpart of $\chi$.}
\begin{equation}\label{matterequations2_periodicscalar}
\renewcommand{\arraystretch}{2}
\begin{split}
&
 \big(
 \underbrace{V D_k (Vg^{kj}\partial_j
 )
 - V^2 \red{G'(0)}
 }_{=:\hat P}\big) \linpsi
\\
 &\qquad
 +(1 - V^{2} \theta_k\theta^k) \omega^2 \linpsi
 -
 i\omega V^2 (\theta^j \partial_j\linpsi+V^{-1}D_j(V\theta^j \linpsi))=0\,.
 \end{split}
\end{equation}

\medskip

We are ready now to pass to the

 \medskip

{\noindent \sc Proof of Theorem~\ref{T29VI17.1}:}
We show below 
that, under our hypotheses, \emph{there exists a differentiable map} $(\chi,\omega)$,
\bel{3II17.4a}
 (g,V,\theta)\mapsto \big(\chi=\chi(g,V,\theta),\omega=\omega(g,V,\theta)\big)
 \,,
\ee
which to triples $(g,V,\theta)$ near $(\mathring g, \mathring V,0) $ assigns a non-trivial solution to
 \eqref{26VII17.1}.
 For $\elambda$ and $\mathring \theta$ small enough  one can then use the implicit function theorem near $(\elambda,g,V,\red{\theta})=(0,\mathring g, \mathring V,\red{0})$ to solve the remaining Einstein-scalar field equations
\begin{equation}\label{mainequations_EMD_periodicscalar}
\renewcommand{\arraystretch}{2}
\left\{\begin{array}{l}
V(-\Delta_g V + n V) =  -\frac{1}{2}\omega^2|\elambda \chi|^2   + G'(|\elambda \chi|^2) \frac{V^2 |\elambda \chi|^2}{n-1} \, ,
\\
\begin{split}
    R_{ij}+n g_{ij}-& V^{-1}\nablariem _i\nablariem _j V=
     \frac{1}{2V^{2}} \lambda_{ik} \lambda^k{}_j
    + \frac{1}{2}\Re(\partial_i \elambda \chi \partial_j \overline {\elambda \chi})\\
    &+\frac{1}{2}(\theta_i\theta_j \omega^2 |\elambda \chi|^2 - \omega\theta_i \Im(\overline {\elambda \chi}\partial_j\elambda \chi)-\omega\theta_j\Im(\overline {\elambda \chi}\partial_i \elambda \chi))\\
    & +\frac{g_{ij}}{n-1} G'(|\elambda \chi|^2) |\elambda \chi|^2 \, ,
\end{split}
\\
V^{-1}\nablariem ^j (V \newF_{ij})= \omega\Im(\overline {\elambda \chi} \partial_i\elambda \chi)-\theta_i\omega^2 |\elambda \chi|^2 \, ,
\end{array}\right.
\end{equation}
where $\Im$ denotes  the imaginary part, with
$$
 \newF_{ij}=-V^2(\partial_i \theta_j - \partial_j
\theta_i)
 \, .
$$
We have also normalised the cosmological constant $\Lambda$ to
\bel{1VII17.1}
 \Lambda=-\frac{n(n-1)}{2}
  \,.
\ee
In \eq{mainequations_EMD_periodicscalar} the field $\chi$ and the frequency $\omega$ are understood as functions of $(g,V,\theta)$ given by the map \eq{3II17.4a}.
The reader is referred to the accompanying papers~\cite{ChDelayEM,ChDelayKlinger} for analytical details concerning \eqref{mainequations_EMD_periodicscalar}.
We simply mention that  the behaviour of the metric functions in local coordinates near the conformal boundary is given by \eq{result0}.
A precise set of
weighted function spaces that can be used when invoking the implicit function theorem in the proof are, e.g.,
\beal{result2}
 & V-\mathring V\in
     C_{1}^{k+2,\alpha}
 \,,\;\;
  g-\mathring g\in
    C_{2}^{k+2,\alpha} 
    \,,
 &
\\
 &
 \quad
 \theta
  -\mathring \theta
  \in C_{2}^{k+2,\alpha}
  \,,
 &
\eeal{result3}
for some $k\ge 0$ and $\alpha \in (0,1)$, with small norms in those spaces. Here, as elsewhere, $\rho$ is a defining function for the conformal boundary at infinity, with $\mathring V$ behaving as $\rho^{-1}$ for small $\rho$, and with $\rho^2 \mathring g$ extending smoothly to a positive-definite tensor field at $\rho=0$.
Finally, a tensor field $u$ is in $C^{k,\alpha}_\sigma$ if and only if $\rho^{-\sigma} u$ belongs to the usual H\"older space $C^{k,\alpha}( M)$, the norm being  defined
with respect to a fixed conformally compact metric on $M$, say $\mathring g$, in \eq{19VII17.1} below.

Alternatively, one can carry-out the proof in $L^2$-type Sobolev spaces, weighted to reflect the asymptotic behaviours just described, which are Hilbert spaces.
\qed

\medskip

\begin{Remark}
  \label{R28VII17.1}
When analysing the spectral properties of the operators associated with the equations satisfied by the scalar field $\scalarfield $, and thus $\chi$, one works in a subspace of the Hilbert space $H_{\delta'}^2$, with $\delta'$ restricted as in the proof of Proposition~\ref{propresoP}. It follows, however, from the equations at hand that the solutions satisfy
\bel{28VII17.1}
 \scalarfield  = O(\rho^{(\sqrt{4 \ell^2 \Vpp + n^2}+n)/2})
  \,,
\ee
and in fact $\scalarfield $ will be in a weighted H\"older space with this decay weight and  differentiability class as high as desired.
\end{Remark}

\begin{Remark}
  \label{R9VII17.1}
The above argument applies to real $\chi$'s and $\elambda$'s, with  $\mathring \theta\equiv 0$, in which case static metrics are obtained.
\end{Remark}

It thus remains to show the existence of non-trivial solutions of  \eqref{26VII17.1}  
such that the map \eqref{3II17.4a} exists and is differentiable.

\subsection{A spectral problem}
 \label{ss8I16.1}

The aim of this section is to establish Corollary~\ref{C5VII17.1} below, which guarantees a discrete set of frequencies for the problem at hand.

We start by considering  the $\linpsi$-equation  \eq{matterequations2_periodicscalar}
with $\theta\equiv 0$:
\begin{equation}
 \label{8I17.1}
P\linpsi:= \frac{V}{ \sqrt{\det g}}\partial_i (V \sqrt{\det g}g^{ij}\partial_j \linpsi) =(-    \omega^2 \pVpp) \linpsi
 \,,
 \quad
 \omega \in \R^*
 \,.
\end{equation}
It is convenient to get rid of the first-order-derivative terms in $P$. For this we define
\bel{4VII13.13}
 \psi=V^{-1/2}u
 \,.
\ee
Using $P=
V^2\Delta_g+V\spacenabla^iV\spacenabla_i$, we find
%
\beal{8III17.12}
 P\psi
 & = &
  V^{3/2}\left(\Delta_g+\frac{1}{4}V^{-2}|dV|^2_{g}-\frac 12 V^{-1} \Delta_g V\right)u
\\
 \nn
 & = &
  V^{3/2}\left(\Delta_g+\frac{1}{4}V^{-2}|dV|^2_{g}-\frac n2\right)u
	\\
 \nn
 & =: &
  V^{-1/2}V^2\big(\underbrace{\Delta_g+\mathcal V}_{=: P_{\mathcal V}} \big)u
   \,.
\eea
So $\psi$ is an eigenfunction of
\bel{3VII17.1}
  \hat P:= P - V^2 \Vpp
   \,.
\ee
if and only if $u$ is an eigenfunction of $V^2 (P_{\mathcal V}-\Vpp)$,
with the same eigenvalue. We wish to show that $\hat P$ has compact resolvent. This might seem surprising at first, as it is well known that for asymptotically hyperbolic metrics the operator $\Delta_g$ does not have a compact resolvent when acting, e.g., on $L^2(d\mu_g)$. The rationale for compactness in our case is that the principal symbol of our operator contains a multiplicative factor $V^2$ in front of $\Delta_g$, with $V$ going to infinity as the boundary is approached.

Let us set
$$
 L:= V^2(P_{\mathcal V}-\Vpp+\lambda V^{-2})\,.
$$
We will show that $L$, and then $\hat P$, have  discrete spectra when acting
on suitable weighted $L^2$-spaces.
Similarly to~\cite{Lee:fredholm} and~\cite{andersson:elliptic}, for $\delta\in \R$ we denote by $L^2_\delta$ the space  of functions (or tensors)
 $u\in L^2_{\loc}$ for which the following norm is finite:
$$
 \|u\|_{L^2_\delta}^2=\int_M|u|^2_g\rho^{2\delta}d\mu_g
$$
(recall that $\rho$ is a defining function for the conformal boundary at infinity). Similarly, for $k\in N$ and $\delta\in \R$, we denote  by $H^k_\delta$ the space  of functions (or tensors) in $H^k_{\loc}$, having covariant derivatives  up to order $k$ in $L^2_\delta$, with the obvious norms.

A triple $(M,V,g)$ will be said \emph{asymptotically hyperbolic} if
$$
 (S^1\times M,V^2dt^2+g)
$$
is
$C^2$-conformally compactifiable, with sectional curvatures approaching minus one at the conformal boundary. In this work we always assume  $V>0$.  The Riemannian counterparts $ V^2dt^2 +g$ of the $(n+1)$-dimensional AdS metrics $-V^2dt^2 +g$ provide examples of asymptotically hyperbolic metrics.

We have:

\begin{proposition}\label{propresoP}
Let $(M,g,V)$ be conformally compact and asymptotically hyperbolic.
For $\Vpp>-n^2/4$ set $s:=\sqrt{4\Vpp+n^2}$. The operator $V^2(P_{\mathcal V}-\Vpp)$ has  compact
resolvent when acting  on  $L^2_{\delta}$ if
$\delta^2<s^2/4$. In particular it has a discrete spectrum on this space.
\end{proposition}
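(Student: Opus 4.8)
The plan is to combine the weighted elliptic theory for asymptotically hyperbolic operators (in the spirit of Lee and Andersson) with the observation, stressed in the text, that the factor $V^2\sim\rho^{-2}$ in front of $\Delta_g$ forces the resolvent to \emph{gain decay}, and it is this gain that produces compactness. I would work in Graham--Lee normal coordinates near $\partial M$, in which $g=\rho^{-2}(d\rho^2+h)$ to leading order and $V=\rho^{-1}(1+o(1))$, so that $V^{-2}d\mu_g\simeq\rho^2\,d\mu_g$ and the potential $\mathcal V=\tfrac14 V^{-2}|dV|^2_g-\tfrac n2$ tends to $\tfrac14-\tfrac n2$ at the boundary. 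A direct computation gives $\Delta_g\rho^\gamma=(\gamma^2-(n-1)\gamma)\rho^\gamma+o(\rho^\gamma)$, so the indicial polynomial of the asymptotically hyperbolic operator $P_{\mathcal V}-\Vpp=\Delta_g+(\mathcal V-\Vpp)$ is $\gamma^2-(n-1)\gamma+(\tfrac14-\tfrac n2-\Vpp)$, whose discriminant equals $n^2+4\Vpp=s^2$ and whose roots are therefore $\gamma_\pm=\tfrac12\big((n-1)\pm s\big)$. Since $\rho^{\gamma}\in L^2_\sigma$ near $\partial M$ exactly when $\gamma>\tfrac{n-1}{2}-\sigma$, the critical weights at which the theory degenerates are $\sigma=\mp s/2$, and the hypothesis $\delta^2<s^2/4$ says precisely that $\delta$ lies strictly in the open interval $(-s/2,s/2)$ between them; note that $s$ is real because $\Vpp>-n^2/4$.

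Next I would set up the realization. The operator $A:=V^2(P_{\mathcal V}-\Vpp)$ is symmetric on $C_c^\infty(M)$ with respect to the measure $V^{-2}d\mu_g$, with bounded-below quadratic form $Q(u)=\int_M\big(|du|^2_g+(\Vpp-\mathcal V)|u|^2\big)\,d\mu_g$, so it admits a self-adjoint realization on $L^2(V^{-2}d\mu_g)$ and its spectrum is real. Using this together with the weighted Fredholm theory, for $\delta$ in the window the map $A-z$, on the appropriate weighted domain, is Fredholm of index zero; injectivity for a suitable spectral parameter (e.g.\ one far to the left, controlled by the lower bound on $Q$) then makes $(A-z)^{-1}$ a well-defined bounded operator on $L^2_\delta$, and the kernel and cokernel are weight-independent across the window because any solution decays like the admissible indicial rate $\rho^{\gamma_+}$, which lies in every $L^2_\delta$ with $\delta>-s/2$.

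The heart of the argument, and the step I expect to be the main obstacle, is to show that $(A-z)^{-1}$ improves decay and is therefore compact. Fix $f\in L^2_\delta$ and put $u=(A-z)^{-1}f$; then $(P_{\mathcal V}-\Vpp-zV^{-2})u=V^{-2}f$, and since $V^{-2}\sim\rho^2$ the source carries two more powers of decay than $f$, while the term $zV^{-2}\to0$ does not disturb the indicial roots. Because $\delta<s/2$, the space $L^2_\delta$ excludes the slow indicial mode $\rho^{\gamma_-}$ (which lies in $L^2_\sigma$ only for $\sigma>s/2$), so the $L^2_\delta$-solution $u$ must carry the fast asymptotics $\rho^{\gamma_+}$; equivalently, no critical weight lies in $(-s/2,\delta)$, so the asymptotically hyperbolic elliptic theory lets $u$ inherit the decay of the source down to the obstruction at $-s/2$. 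Either way $u\in H^2_{\delta''}$ for every $\delta''$ with $-s/2<\delta''<\delta$, and such $\delta''$ exists precisely because $\delta>-s/2$. Thus both inequalities contained in $\delta^2<s^2/4$ are used: $\delta<s/2$ kills the slow mode, and $\delta>-s/2$ leaves room to gain decay.

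Finally, on an asymptotically hyperbolic manifold the inclusion $H^2_{\delta''}\hookrightarrow L^2_\delta$ is compact whenever $\delta''<\delta$: local Rellich--Kondrachov compactness handles any fixed compact subset, while the strict decay gain $\delta''<\delta$ controls the tails near the conformal boundary and prevents mass from escaping through the infinite-volume end. Consequently $(A-z)^{-1}\colon L^2_\delta\to L^2_\delta$ is compact, so $A$ has discrete spectrum, which is the assertion. The delicate points that must be checked carefully are the weighted elliptic regularity and decay-improvement statement in the exact normalization used here, the verification that the subprincipal term $zV^{-2}$ and the $o(1)$ corrections to $V$ and $g$ are genuinely subleading for the indicial analysis, and the self-adjoint/Fredholm bookkeeping that makes $(A-z)^{-1}$ well defined on $L^2_\delta$ uniformly across the whole window.
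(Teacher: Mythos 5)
Your proof is correct and takes essentially the same route as the paper's: the $V^{-2}\sim\rho^{2}$ factor in the source makes the resolvent map $L^2_{\delta}$ into $H^2_{\delta'}$ for some $\delta'<\delta$ still inside the indicial window, and the compact embedding $H^2_{\delta'}\hookrightarrow L^2_{\delta}$ then yields compactness; the paper merely packages the decay-improvement step differently, via a boundary Poincar\'e-type estimate together with \cite[Lemma~3.8]{andersson:elliptic} and \cite[Proposition~6.5~(a)]{Lee:fredholm}, whereas you invoke the indicial/Fredholm theory directly (and you are more explicit than the paper about why the resolvent set is nonempty). The only imprecision is your claim that $u\in H^2_{\delta''}$ for \emph{every} $\delta''\in(-s/2,\delta)$: since $V^{-2}f$ only lies in $L^2_{\delta-2}$, this holds only for $\delta''\ge\delta-2$, but a single admissible $\delta''<\delta$ with $\delta''^2<s^2/4$ (which exists because $\delta>-s/2$) is all the argument needs.
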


\begin{proof}
For the proof it is useful to keep in mind that the asymptotic behaviour of the kernel of
$$
 \hat P + \lambda =  V^{2}\big(
 V^{-\frac12}(P_{\mathcal V}-\Vpp+\lambda V^{-2})V^{\frac12}\big)
$$
is governed by the characteristic indices of $P_{\mathcal V}-\Vpp+\lambda V^{-2}$,
with a further shift by $1/2$ due to the $V^{1/2}$ factor. The indices for $P_{\mathcal V}-\Vpp+\lambda V^{-2}$ are the solutions $\sigma_\pm$  of the equation
$$
 \sigma(n-1-\sigma)+\frac n2-\frac14+\Vpp=0
  \,,
$$
independently of $\lambda$,
that is
$$
\sigma_\pm=\frac{(n-1)\pm s}2
 \,.
$$
In either case the length of the characteristic interval for $\hat P + \lambda$ is $s$, centered at $n/2$.

To account for the above, and for the multiplication of $P_{\mathcal V}$ by $V^2\sim \rho^{-2}$ when passing to $\hat P$, we choose  $\delta'\in\R$ such that $\delta-2\leq\delta'<\delta$ and  $\delta'^2<s^2/4$.
In order to show that the resolvent  of our operator  is compact, we will prove
that its domain is a subset of  $H^2_{\delta'}$ and so is compactly embedded in
$L^2_{\delta}$ (cf., e,g.,~\cite[Lemma~3.6 (d)]{Lee:fredholm} or~\cite[Theorem~2.3 (6)]{andersson:elliptic}).

We first claim that  we have an inequality of the form:
\bel{4VI17.11}
 -\int_M u(P_{\mathcal V}-\Vpp)u\;d\mu_g\ge C(\epsilon) \int_M u^2d\mu_g,
\ee
for all $u$ smooth and compactly supported in $\{\rho<\epsilon\}$,  $C(\epsilon)$ being a positive constant.
For that, by, e.g., \cite[Lemma~3.4]{ChDelayEM}, there exists a constant $0<C<(n-1)^2/4$, which can be made as close to $(n-1)^2/4$
as desired by choosing $\epsilon$ small enough, such that for $u$ as above we have
$$
 \int_M |du|_g^2d\mu_g\ge C\int_M u^2d\mu_g
  \,.
$$
It holds that $\mathcal V\rightarrow  \frac14-\frac n2$
as conformal infinity is approached.
We deduce that \eq{4VI17.11}
 is satisfied if $\Vpp>-n^2/4$ with $C(\epsilon)$ as close as $\Vpp+n^2/4$ as desired.
We can thus use~\cite[Lemma~3.8]{andersson:elliptic} to conclude that
for $\delta^2<\Vpp+n^2/4=s^2/4$ there exists a positive constant $C(\epsilon,\delta)$ such that
\bel{4VII17.12}
\|(P_{\mathcal V}-\Vpp)u\|_{L^2_\delta}\ge C(\epsilon,\delta) \|u\|_{L^2_\delta},
\ee
for all $u$ smooth and compactly supported in $\{\rho<\epsilon\}$.

Suppose, now, that  $V^2(P_{\mathcal V}-\Vpp)u$  and $u$ are
in $L^2_{\delta}$. Since $V$ behaves as $\rho^{-1}$ near the boundary, it holds that  $(P_{\mathcal V}-\Vpp)u$  is in $L^2_{\delta-2}\subset L^2_{\delta'}$.
It follows from \eq{4VII17.12} and, e.g., \cite[Proposition~6.5 (a)]{Lee:fredholm} that $u\in H^2_{\delta'}$.
\end{proof}

\begin{corollary}
 \label{C5VII17.1}
Let $(M,g,V)$ be conformally compact and asymptotically hyperbolic.
For $\Vpp>-n^2/4$ and
%
$$
\delta^2<\frac {s^2} 4 \equiv  \Vpp + \frac {n^2} 4
$$
the operator $\hat P$ has  compact
resolvent when acting  on  $L^2_{\delta-\frac12}$.  In particular it has a discrete spectrum on this space.
\end{corollary}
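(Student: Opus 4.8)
The plan is to reduce the statement to Proposition~\ref{propresoP} by means of the substitution $\psi=V^{-1/2}u$ introduced in \eqref{4VII13.13}, which was already seen in \eqref{8III17.12} to intertwine $\hat P$ with $V^2(\PV-\Vpp)$. Concretely, writing $M_f$ for the operator of multiplication by a function $f$, the computation leading to \eqref{8III17.12} together with the definition \eqref{3VII17.1} of $\hat P$ gives, for $\psi=V^{-1/2}u$,
\begin{equation}
\label{conj-rel}
\hat P\psi=(P-V^2\Vpp)(V^{-1/2}u)=V^{-1/2}\,V^2(\PV-\Vpp)\,u\,,
\end{equation}
so that, as operators,
\begin{equation}
\label{conj-rel2}
\hat P=M_{V^{-1/2}}\circ V^2(\PV-\Vpp)\circ M_{V^{1/2}}\,.
\end{equation}
Thus $\hat P$ is conjugate to $V^2(\PV-\Vpp)$ through the multiplication operator $M_{V^{-1/2}}$.

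The next step is to identify the weighted spaces between which $M_{V^{\pm1/2}}$ acts. Since $(M,g,V)$ is asymptotically hyperbolic, $V$ is comparable to $\rho^{-1}$ near the conformal boundary, so there are constants $0<c\le C$ with $c\,\rho^{1/2}\le V^{-1/2}\le C\,\rho^{1/2}$. Consequently, for any $\delta\in\R$,
\begin{equation}
\label{wt-shift}
\|V^{-1/2}u\|_{L^2_{\delta-\frac12}}^2=\int_M|u|_g^2\,V^{-1}\rho^{2\delta-1}\,d\mu_g
\end{equation}
is comparable to $\int_M|u|_g^2\,\rho^{2\delta}\,d\mu_g=\|u\|_{L^2_\delta}^2$, and symmetrically for $M_{V^{1/2}}$. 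Hence $M_{V^{1/2}}\colon L^2_{\delta-\frac12}\to L^2_\delta$ is a topological isomorphism, with inverse $M_{V^{-1/2}}\colon L^2_\delta\to L^2_{\delta-\frac12}$.

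Finally I would transport compactness of the resolvent across \eqref{conj-rel2}. For $z$ in the resolvent set, \eqref{conj-rel2} yields
\begin{equation}
\label{resolvent-conj}
(\hat P-z)^{-1}=M_{V^{-1/2}}\,\bigl(V^2(\PV-\Vpp)-z\bigr)^{-1}\,M_{V^{1/2}}\,,
\end{equation}
where $M_{V^{1/2}}\colon L^2_{\delta-\frac12}\to L^2_\delta$ and $M_{V^{-1/2}}\colon L^2_\delta\to L^2_{\delta-\frac12}$ are bounded. By Proposition~\ref{propresoP}, the middle factor $\bigl(V^2(\PV-\Vpp)-z\bigr)^{-1}$ is compact on $L^2_\delta$ whenever $\delta^2<s^2/4$. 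A bounded--compact--bounded composition is compact, so \eqref{resolvent-conj} shows that $\hat P$ has compact resolvent on $L^2_{\delta-\frac12}$ in the same range $\delta^2<s^2/4=\Vpp+n^2/4$, and therefore discrete spectrum there. This is the claimed statement.

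The only delicate point---mild in this case---is the bookkeeping of the weight exponent: one must verify that multiplication by $V^{-1/2}\sim\rho^{1/2}$ shifts the weight parameter by exactly $1/2$, as in \eqref{wt-shift}, so that the admissible range $\delta^2<s^2/4$ is inherited verbatim and only the target space changes from $L^2_\delta$ to $L^2_{\delta-\frac12}$. This is precisely the shift by $1/2$ already flagged in the proof of Proposition~\ref{propresoP}; no further analytic input is required beyond confirming that $M_{V^{-1/2}}$ is a genuine isomorphism of the relevant weighted $L^2$ spaces.
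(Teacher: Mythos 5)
Your proposal is correct and follows essentially the same route as the paper: the paper's own proof consists precisely of the conjugation identity $\hat P=V^{-\frac12}[V^2(P_{\mathcal V}-\Vpp)]V^{\frac12}$ together with the observation that $V^{\frac12}\psi\in L^2_{\delta}$ if and only if $\psi\in L^2_{\delta-\frac12}$, which is exactly your weight-shift bookkeeping. You merely make explicit the intermediate steps (boundedness of the multiplication operators and the bounded--compact--bounded composition) that the paper leaves implicit.
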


\begin{proof}
Since
$$
\hat P=V^{-\frac12}[V^2(P_{\mathcal V}-\Vpp)]V^{\frac12},
$$
we have  $V^{\frac12}\psi\in L^2_{\delta}$ (compare \eq{4VII13.13}) if and only if $\psi\in L^2_{\delta-\frac12}$.
\end{proof}
%
%
\begin{remark}
 \label{R7VII17.1}
The  function  $\rho^{\beta}$ is in $L^2_{\delta-\frac12}$ if and only if
$$
\beta+\delta-\frac12>\frac{n-1}2.
$$
So $\rho^{\frac{n-s}2}$ is \emph{not} in $L^2_{\delta-\frac12}$ if and only if $\delta\leq s/2$, and
 $\rho^{\frac{n+s}2}$ is in $L^2_{\delta-\frac12}$ if and only if $\delta>-s/2$.
\end{remark}

\subsection{Non-zero $\theta$, and/or nonlinear scalar fields}
 \label{s19VII17.1}

We start by noting that:

\begin{proposition} \label{P1VII17.1}
  For $\Vpp>-n^2/4$, the operator \eqref{matterequations2_periodicscalar} acting  on  $L^2_{\delta-1/2}$  with
$\delta^2<s^2/4$ has compact resolvent as long as $V(|\theta|_g+|D^i\theta_i|)$ is small enough in $L^\infty$.
\end{proposition}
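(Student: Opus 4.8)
The strategy is to treat the full operator in \eqref{matterequations2_periodicscalar} as a bounded perturbation of the $\theta\equiv 0$ operator $\hat P$, whose compactness of resolvent was already established in Corollary~\ref{C5VII17.1}, and then to invoke the standard fact that a sufficiently small \emph{relatively bounded} perturbation preserves the Fredholm and discreteness properties. First I would write the operator in \eqref{matterequations2_periodicscalar} as
\begin{equation*}
\hat P + (1-V^2\theta_k\theta^k)\omega^2 - i\omega V^2\big(\theta^j\partial_j + V^{-1}D_j(V\theta^j\,\cdot\,)\big) =: \hat P + \mathcal R_\theta,
\end{equation*}
where $\hat P = P - V^2\Vpp$ is the operator of Corollary~\ref{C5VII17.1}. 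The zeroth-order terms proportional to $\omega^2$ merely shift the spectral parameter and cause no trouble, so the essential content is the first-order operator $\mathcal R_\theta$ carrying the $\theta$-dependence.

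\textbf{Controlling the perturbation.} The key estimate I would aim for is that $\mathcal R_\theta$ is small relative to $\hat P$ in the sense that, on the domain of $\hat P$ inside $L^2_{\delta-1/2}$, one has a bound of the form
\begin{equation*}
\|\mathcal R_\theta u\|_{L^2_{\delta-1/2}} \le \eta\,\|\hat P u\|_{L^2_{\delta-1/2}} + C_\eta\,\|u\|_{L^2_{\delta-1/2}},
\end{equation*}
with $\eta$ controlled by $\|V(|\theta|_g+|D^i\theta_i|)\|_{L^\infty}$. The heuristic is that $\hat P$ is second order with principal symbol carrying a factor $V^2\sim\rho^{-2}$, while $\mathcal R_\theta$ is first order with a coefficient $V^2\theta^j$; since $\theta^j$ is multiplied against only one derivative rather than two, the $L^\infty$-smallness of $V|\theta|_g$ together with an interpolation (or Cauchy--Schwarz) estimate absorbing one derivative of $u$ into $\hat P u$ should yield the displayed relative bound with small $\eta$. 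The term $V^{-1}D_j(V\theta^j\linpsi) = (\theta^j\partial_j\linpsi) + V^{-1}D_j(V\theta^j)\linpsi$ contributes a genuine first-order piece, handled as above, plus a zeroth-order piece whose coefficient $V^{-1}D_j(V\theta^j)$ I would bound by $V|D^i\theta_i|$ up to lower-order terms — which is precisely why the hypothesis involves $V|D^i\theta_i|$ and not just $V|\theta|_g$.

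\textbf{Concluding discreteness.} With the relative bound in hand, I would argue as follows. Corollary~\ref{C5VII17.1} gives that $\hat P$ has compact resolvent on $L^2_{\delta-1/2}$; equivalently, for $\mu$ outside its spectrum, $(\hat P-\mu)^{-1}$ is compact, and the domain of $\hat P$ embeds compactly into $L^2_{\delta-1/2}$. Since $\mathcal R_\theta$ is $\hat P$-bounded with relative bound $\eta<1$, the Kato--Rellich perturbation theory (equivalently, a Neumann-series argument on $(\hat P-\mu)^{-1}$) shows that $\hat P+\mathcal R_\theta-\mu$ is invertible for $\mu$ far in the resolvent set, and that its inverse differs from $(\hat P-\mu)^{-1}$ by a composition with a bounded operator, hence remains compact; the operators share the same domain, which still embeds compactly. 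Thus $\hat P+\mathcal R_\theta$ has compact resolvent and therefore discrete spectrum.

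\textbf{Main obstacle.} The delicate point is the relative boundedness estimate near the conformal boundary, where $V\to\infty$: one must verify that the first-order term $\mathcal R_\theta$ really is controlled by $\hat P$ in the \emph{weighted} space $L^2_{\delta-1/2}$, rather than merely in plain $L^2$. This requires tracking the weights $\rho^{\delta-1/2}$ through the integration-by-parts or interpolation step, ensuring that the factors of $V$ multiplying $\theta$ and $D^i\theta_i$ are exactly compensated by the $V^2$ appearing in the principal part of $\hat P$, so that smallness of $V(|\theta|_g+|D^i\theta_i|)$ in $L^\infty$ — and not some stronger weighted norm — suffices to make $\eta$ small. Once this weighted relative bound is secured, the abstract perturbation argument is routine.
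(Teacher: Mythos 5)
Your proposal is correct and follows essentially the same route as the paper: the paper likewise splits the operator into $\hat P$ plus a perturbation controlled by $V(|\theta|_g+|D^i\theta_i|)$ and preserves compactness of the resolvent via the identity $(\hat P+B-\mu)^{-1}=(\operatorname{Id}+(\hat P-\mu)^{-1}B)^{-1}(\hat P-\mu)^{-1}$, citing Kato. Your relative-boundedness (Kato--Rellich) framing is in fact a more careful rendering of the paper's two-line argument, since the first-order term $V^2\theta^j\partial_j$ is bounded only from the graph-norm domain to $L^2_{\delta-1/2}$, not on $L^2_{\delta-1/2}$ itself.
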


\proof
The operator \eq{8I17.1}, perturbed by a sufficiently small  bounded operator  between  the spaces  above, will keep
the compact resolvent property. Indeed,
for a bounded operator $B: H \to H$ whose norm is small, if $\hat P$ has compact resolvent then the sum $\hat P + B$ still has compact resolvent, since
%
\begin{equation}\label{22VII17.1}
 (\hat P+ B -\mu)^{-1} = (\operatorname{Id} + (\hat P - \mu)^{-1} B)^{-1}(\hat P - \mu)^{-1}
 \,.
\end{equation}
%
See also \cite[IV.3.17]{Kato76} for a  more general result.

Equation (\ref{matterequations2_periodicscalar})  provides such an operator if $V(|\theta|_g+|D^i\theta_i|_g)$ is small enough in $L^\infty$.
\qed

 Note that in the setting above, for each $ x=(V,g) $ in an open set of such pairs the resolvent $R(x):=(\hat P(x)-z)\i:H\to H$ is a compact operator for $z$ in the resolvent
set of $\hat P(x)$.

It follows now from the Appendix
that, given a simple eigenvector ${\zchi}$ of $\hat P$ at $(\mathring g, \mathring V)$ with frequency $\zomega \ne 0$,
there exists a  differentiable map $(\linpsi,\omega)$,
\bel{3II17.4}
 (g,V)\mapsto \big(\chihere =\linpsi(g,V),\omega=\omega(g,V)\big)
 \,,
\ee
which to pairs $(g,V)$ near $(\mathring g, \mathring V) $ assigns a non-trivial solution to
\eqref{matterequations2_periodicscalar}. 
We wish to extend this map to triples $(g,V,\theta)$ near $(\mathring g, \mathring V,0)$. For this, it is convenient to replace \eq{matterequations2_periodicscalar} by a self-adjoint version thereof. As in \eq{4VII13.13} we set $
 \chihere =V^{-1/2}u$, and using \eq{8III17.12} we have the following equivalent forms of \eqref{26VII17.1}
\beal{19VII17.11}
&&
\\
 \nonumber
 &
  (\hat P  + \omega^2)\chihere  =
 \underbrace{
  V^{2} \left[\theta_k\theta^k \omega^2 \chihere
 +
 i\omega (\theta^j \partial_j\chihere +V^{-1}D_j(V\theta^j \chihere ))
\red{ +\left(G'(|\elambda \psi|^2)-G'(0)\right)\chihere}
 \right]}
 _
 {=:F(\theta,\omega,\elambda, \chihere) }
 &
\\
 \nonumber
&
 \Longleftrightarrow
 \qquad
 \big(\underbrace{\Delta_g+\mathcal V}_{=: \PV} \red{- G'(0)}+ \red{V^{-2}}\omega^2 \big)u =
 V^{-3/2}
 F(\theta,\omega, \elambda, V^{-1/2}u)
\,.
&
\eea%
 In the linear case the idea of the argument is as follows:
Let ${\zu}$ be a simple eigenfunction of $\PV$ with norm one and with eigenvalue $-\zomega^2 $. Write
$$
 u = {\zu} + \delta u
 \,,
  \qquad
  \omega=\zomega + \delta \omega
$$
where $\delta \red{u}$ is orthogonal to ${\zu }$. Project the equation \eqref{19VII17.11} into a component orthogonal to ${\zu }$, and another along ${\zu }$. The operator $\hat P + \zomega $ is an isomorphism on the space orthogonal to ${\zu }$, and the equation can therefore be solved for $\delta \red{u}$ as a differentiable function of $\omega$ and $\theta$ for all $(\omega,\theta)$ close enough to $(\zomega ,0)$. One can then use the implicit function theorem to solve the equation along ${\zu }$ for $\omega$.

 We note that the solution will depend differentiable upon $g$ and $V$ by the results of Appendix~\ref{AS11VII17.1}.

Let us pass to the details of the above. We denote by $\Pzc$ the $L^2$-orthogonal projection operator on ${\zu }$ and by $\Pzperp$ the $L^2$-orthogonal projection on $\zu ^\perp$, thus
$$
 \Pzc f = \left( \int_M \overline {{\zu }} f d\mu_g \right) {\zu }
 \,,
 \quad
  \Pzperp
   f  = f - \Pzc f
   \,.
$$
We obtain yet another equivalent form of \eqref{matterequations2_periodicscalar} by rewriting the second equation in \eqref{19VII17.11} as the following pair of equations
\begin{eqnarray}
   \label{22VIIl17.3}
 \phantom{xxxxx}
   ( \PV+ \zomega^2) \delta u  &
 =
  &
   (\zomega^2 - \omega^2) \delta u  + \Pzperp\big( V^{-3/2}
 F(\theta,\omega)(V^{-1/2}(\zu+\delta u) \big)
  \,,
\\
  \underbrace{\omega^2 - \zomega^2}_{ \delta \omega (2\zomega +\delta \omega)} &=& \int \overline{\zu} \big( V^{-3/2}
 F(\theta,\omega)(V^{-1/2}(\zu+\delta u) \big)
  d\mu_{\mathring g}
  \,.
   \label{22VIIl17.2}
\end{eqnarray}
We need to show that for real $\omega$'s the right-hand side of \eqref{22VIIl17.2} is real. For this, some auxiliary notation is useful. Let us write the second of equations \eqref{19VII17.11}
as $Ax=y$, where $A$ is self-adjoint. Write $x=x_\parallel + x_\perp$, where $ x_\parallel$ is parallel to $\zu$ and $ x_\perp$ is orthogonal to $\zu$, similarly for $y$. Then \eqref{22VIIl17.3}-\eqref{22VIIl17.2} read
$$
  A x_\parallel = y_\parallel
  \,,
  \qquad  A x_\perp = y_\perp
  \,.
$$
The right-hand side of \eq{22VIIl17.2} equals the scalar product of $\langle x_\parallel, y_\parallel\rangle$, so we need to show that this scalar product is real. Now,
$$
 \langle x , y \rangle =
 \langle x_\parallel, y_\parallel\rangle+
 \langle x_\perp, y_\perp\rangle =
 \langle x_\parallel, y_\parallel\rangle+
 \langle x_\perp, A x_\perp\rangle
 \,.
$$
Since $A$ is self-adjoint the last term is real, and thus $\langle x_\parallel, y_\parallel\rangle$ will be real if and only if $\langle x , y \rangle$ is real. This is indeed the case, which can be seen by the following calculation:
\begin{eqnarray*}
    \lefteqn{
  \Im \left(
      \int \overline{(\zu + \delta u)} \big( V^{-3/2}
 F(\theta,\omega)(V^{-1/2}(\zu+\delta u) \big)
  d\mu_{\mathring g}
   \right)
   }
   &&
\\
 \nonumber
 &&   =
     \Im \left(
     \int \underbrace{\overline u}_{V^{1/2}\overline \chihere }  V^{-3/2}
 F(\theta,\omega)(V^{-1/2} u \big)
  d\mu_{\mathring g}
  \right)
\\
\nonumber
 &&   =
     \Im \left(
     \int V^{-1 }\overline \chihere \big(
  V^{2} \theta_k\theta^k \omega^2 \chihere
 +
 i\omega V^2 (\theta^j \partial_j\chihere +V^{-1}D_j(V\theta^j \chihere )) \big)
  d\mu_{\mathring g}
  \right)
\\
\nonumber
 &&   =
     \omega \Re \left(
     \int  \overline \chihere \big(V\theta^j \partial_j\chihere + D_j(V\theta^j \chihere ) \big)
  d\mu_{\mathring g}
  \right)
  =
     \omega
     \int  D_j(V\theta^j  |\chihere |^2  )
  d\mu_{\mathring g}
\\
 &&
  =0
  \,.
\end{eqnarray*}

The following alternative argument applies for scalar fields with a nonlinear potential $\mcV (\scalarfield) = G(|\scalarfield|^2)$ with $G(0)=0$ and $G'(0)>-n^2/4$:
 Set
$$
\tilde  F(\theta,\omega, \elambda, u):=V^{-3/2}
 F(\theta,\omega, \elambda, V^{-1/2}u)
  \,.
$$
Let $k+2>n/2$ and consider the operator from
$$
(\delta u, \delta \omega, \elambda,  \theta) \in (H^{k+2}\cap \zu^\perp)\times \R\times \R\times C^{1,0}_1
$$
to
$( H^k\cap \zu^\perp)\times \R$, which sends $(\delta u,\delta \omega, \elambda, \theta)$ to a pair
\begin{eqnarray}
 \nonumber 
\lefteqn{
 \Big(
   \Pzperp\big(( \PV\red{-G'(0)}+ \red{V^{-2}}\omega^2) (\delta u)  -  \tilde
 F(\theta,\omega, \elambda, \zu  +\delta u )\big)
  \,,
  }
  &&
\\
\nonumber 
 &&
 \phantom{xxx} \int \overline{\zu} \tilde
 F(\theta,\omega,\elambda, \zu +\delta u)
  d\mu_{\mathring g}-( 2 \red{V^{-2}}\zomega \delta \omega+ \red{V^{-2}}\delta\omega^2)
  \Big)
  \,.
\end{eqnarray}
The  derivative in the first two variables at $(0,0,0,0)$ is the map which sends
$(\delta u,\delta\omega)$  to
\begin{eqnarray}
 \nonumber 
\Big(
   \Pzperp\big(( \PV\red{-G'(0)}+ \red{V^{-2}} \zomega^2) (\delta u)  \big)
  \,,
 \
   -2 \red{V^{-2}}\zomega \delta \omega
    \Big)
  \,.
\end{eqnarray}
 The analysis of the linear problem just carried out shows   that this is an isomorphism for $\zomega\neq0$ (in weighted Sobolev spaces as spelled out elsewhere in this paper, see e.g. \cite{Lee:fredholm} for the relevant analytical estimates),
and the implicit function theorem shows that for small $\theta$ and $\sigma$
there exist $(\delta u,\delta\omega)$ solving \eqref{19VII17.11}.

A positive answer to the following question would immediately extend our analysis to non-simple eigenvalues  (compare, however, Remark~\ref{R3VIII17.1}, Appendix~\ref{AS11VII17.1} below):

\begin{question}
 \label{T2VII17.1}
Let $\mathring \psi$ be an eigenvector of $\hat P $ with eigenvalue $\mathring \lambda$ with multiplicity larger than one. Is it true that, under possibly some further natural restrictive conditions, there exists a neighborhood $\mycal U$ of $(\mathring V, \mathring g)$ and a \underline{differentiable map}
\bel{2VII17.1}
 {\mycal U} \ni (V,g) \mapsto (\psi, \lambda)
\ee
which to  $(V,g)$ assigns an eigenvector $\psi$ of $\hat P$ with eigenvalue $\lambda$:
\bel{2VIII17.2}
 \hat P \psi = \lambda \psi
 \,.
\ee
\end{question}

\subsection{Explicit solutions}

In order to apply our technique we need to make sure that there exist non-trivial solutions of the eigenvalue problem \eqref{8I17.1}, with one-dimensional eigenspaces. The solutions in this section were previously found in \cite{Breitenlohner1982} for dimension $n=3$ and in \cite{Mezincescu1985} for general $n$. We rederive the results here, to make clear the relation with the functional analysis results of Section~\ref{ss8I16.1}.

We consider the operator $P$ for the $(n+1)$-dimensional anti-de Sitter metric,
\bel{8I17.2}
 \zglor \equiv - \zV^2 dt^2 + \zg
 := -\big(\frac{r^2 }{\ell^2} +1\big) dt^2
    + \frac{dr^2}{\frac{r^2 }{\ell^2} +1} + r^2 \zh
    \,,
\ee
which we denote by $\mathring P$.
(We have $\ell=1$ with the normalization $\Lambda=-\frac{n(n-1)}{2}$, used elsewhere in this work, but we do not impose this condition in this section.)
Thus
\begin{equation}
 \label{8I17.3}
\zP\linpsi  = \zV^2 \big(  r^{-(n-1)} \partial_r (\zV^{\red{2}} r^{n-1}\partial_r \linpsi) + r^{-2}\Delta_{\zh} \linpsi\big)
 \,.
\end{equation}
Let $\linpsi= \sum \linpsiell(r)  \varphiell  $
be the decomposition of $\linpsi$ into eigenfunctions $\varphiell $ of $\Delta_{\zh}$, $\Delta_\zh \varphiell   = \lambdaell \varphiell $, where $\lambdaell\in\{-k(k+n-2)\}_{k=0}^\infty$.
For each angular mode one is thus led to a radial operator
\begin{equation}
 \label{8I17.4-}
 \zV^2 \big(  r^{-(n-1)} \partial_r (\zV^2 r^{n-1}\partial_r \linpsiell  ) +  \lambdaell r^{-2}  
 \,.
\end{equation}

We start with the equation
\begin{equation}\label{07II17.1}
 \zV^2 \big(  r^{-(n-1)} \partial_r (\zV^2 r^{n-1}\partial_r \linpsiell  ) +  \lambdaell r^{-2}  \linpsiell  \big) = -\omega^2 \linpsiell + \Vpp \zV^2 \linpsiell
 \,.
\end{equation}
Defining a new variable $z:=-r^2/\ell^2$ gives
\begin{equation}
\partial^2_z \linpsiell+\left(\frac{n/2}{z}+\frac{1}{z-1}\right)\partial_z \linpsiell
-\frac{\lambdaell(z-1)+\ell^2 \omega^2 z+\ell^2 \Vpp z(z-1)}{4z^2(z-1)^2}\linpsiell=0
\end{equation}
which manifestly has three regular singular points, at $0$, $1$ and $\infty$. An equation of this type, i.e. a homogeneous, linear, second order ODE with three regular singular points, can be transformed to the hypergeometric equation
(see, e.g.,~\cite[4.3.1]{kristensson2010}): We define $\tlinpsik (z):=z^{-\beta_1} (z-1)^{-\beta_2}\linpsiell(z)$ where $\beta_1$, $\beta_2$ are roots of the indicial equation at $z=0$ and $z=1$ respectively, given by
\begin{equation}
\beta_1=\frac{1}{4}\left(\sqrt{(n-2)^2-4 \lambdaell}-n+2\right)\,,\qquad
\beta_2=-\frac{1}{2} \ell \omega\,.
\end{equation}
The equation for $\tlinpsik $ is now
\begin{equation}
 \label{2VII17.11}
  \begin{split}
z(z-1)\partial^2_z\tlinpsik +\left(\left(2+R/2-\ell\omega\right) z-1-R/2\right)\partial_z\tlinpsik &\\
-\frac{1}{4}\left(\lambdaell+n+\ell^2(\Vpp-\omega^2)+(2+R)(\ell \omega-1)\right)\tlinpsik &=0
 \,, \end{split}\end{equation}
where, for $\lambdaell = -k(k+n-2)$,
$$
 R=\sqrt{(n-2)^2-4\lambdaell}\red{=n-2+2k}
 \,.
$$
Equation~\eqref{2VII17.11} is explicitly of the hypergeometric form, as already observed in~\cite{IshibashiWaldIII}.
The solutions are, after expressing $\tlinpsik $ by $\linpsiell $ and $z$ by $r$,
\begin{equation}\label{12II17.5}
\begin{split}
\linpsiell(r) =(\ell^2+r^2)^{-\ell \omega/2}
\Big[&C_1r^{(2+R-n)/2} {}_2F_1\big(A_1, B_1, 1+R/2, -r^2/\ell^2\big)\\
&+C_2r^{(2-R-n)/2}  {}_2F_1\big(A_2, B_2, 1-R/2, -r^2/\ell^2\big)\Big]
 \,,
\end{split}
\end{equation}
where
\begin{align*}
&A_1:=\frac{1}{4}(-s+R+2-2 \ell w)\,,&B_1:=\frac{1}{4}(s+R+2-2\ell \omega)\,,\\
&A_2:=\frac{1}{4}(-s-R+2-2 \ell w)\,,&B_2:=\frac{1}{4}(s-R+2-2\ell \omega)\,,&
\end{align*}
$s=\sqrt{4 \ell^2 \Vpp + n^2}$, $C_1$, $C_2$ are constants and ${}_2F_1$ is the usual hypergeometric function:
\begin{equation}
 \label{12II17.1}
{}_2F_1(a,b,c,z)=\sum_{n=0}^{\infty} \frac{(a)_n(b)_n}{(c)_n}\frac{z^n}{n!}
\end{equation}
with
\begin{equation}
(a)_n=
\begin{cases}
1\,, & \text{for }n=0\,,\\
a(a+1)\hdots (a+n-1)
 \,, & \text{for } n>0\,.
\end{cases}
\end{equation}
Strictly speaking, \eq{12II17.1} holds
for $|z|<1$, and analytic continuation should be used otherwise.

Keeping in mind that $\lambdaell\le 0$, and since  ${}_2F_1(a,b,c,0)=1$ for all $a,b,c$, the solutions are regular at $r=0$ if and only if $C_2=0$. For $\lambdaell=0$ the solutions converge to $\ell^{-\ell \omega}$ at the origin, while for $\lambdaell<0$ they converge to zero there.

If $a$ or $b$ are non-positive integers then $_2F_1$ is a polynomial. We will show that these are the only solutions in our context.

 In the polynomial case the term ${}_2F_1(a,b,c,-r^2/\ell^2)$ behaves, for $r\to \infty$, as $r^{-2a}$ if $-a\in\N$, as $r^{-2b}$ if $-b\in \N$ and as $r^{-2\max(a,b)}$ if both $-a\in\N$ and $-b\in\N$.
In order to analyse the behavior of ${}_2F_1(a,b,c,z)$ as $|z|\to\infty$ in the general case, the difference $b-a$ is relevant:

Indeed, if $b-a$ is not an integer, then~\cite[Eq.~15.8.2]{NIST:DLMF}
\begin{equation}\label{02II17.2}\begin{split}
{}_2F_1(a,b,c,-r^2/\ell^2)=\frac{\pi}{\sin(\pi(b-a))}\Big[
&\frac{r^{-2a}\ell^{2a}}{\Gamma(b)\Gamma(c-a)}\left(1+O(r^{-2})\right)\\
&-\frac{r^{-2b}\ell^{2b}}{\Gamma(a)\Gamma(c-b)}\left(1+O(r^{-2})\right)
\Big]\,.
\end{split}\end{equation}

If $b-a$ is a non-negative integer, say $p$, then ${}_2F_1(a,b,c,-r^2/\ell^2)$ has the asymptotics~\cite[Eq.~15.8.8]{NIST:DLMF}
\begin{equation}\label{02II17.1}\begin{aligned}
{}_2F_1(a,b,c,-r^2/\ell^2)=&
\frac{r^{-2a}\ell^{2a}}{\Gamma(a+p)}\left(\frac{(p-1)!}{\Gamma(c-a)}\left(1+O(r^{-2})\right)\right)\\
&+\frac{r^{-2b}\ell^{2b}}{\Gamma(a)}\biggl(\frac{\ln(r^2/\ell^2)}{p!\, \Gamma(c-a-p)}+\frac{\Gamma'(a+p)}{\Gamma(a+p)}\\
&\phantom{+\frac{r^{-2b}\ell^{2b}}{\Gamma(a)}\biggl(}\,\,+\frac{\Gamma'(c-a-p)}{\Gamma(c-a-p)}+\Gamma'(1)+O(r^{-2})\biggr)\,.
\end{aligned}\end{equation}
If $b-a$ is a negative integer, \eqref{02II17.1} applies with $a$ and $b$ interchanged.

In our case, in the hypergeometric function in the $C_1$ term we have $\Re(b-a)=\Re(s/2)\geq 0$.

The question arises, how the asymptotic behaviour just seen fits with the analysis in Section~\ref{ss8I16.1}. To make contact with Corollary~\ref{C5VII17.1} we need to check when  $\linpsiell  \in L^2_{\delta+1/2}=L^2(r^{-2\delta -3} d\mu_{\mathring{g}})$.  In view of Remark~\ref{R7VII17.1} we see that choosing
\bel{7VII17.1}
   -\frac s2  < \delta < \frac s 2
    \,,
\ee
the operator $\hat P $ will have compact resolvent, and discrete spectrum.

With the choice \eq{7VII17.1} of the weighted spaces the $r^{-2a}$ term in both \eqref{02II17.2} and \eqref{02II17.1} does not decay fast enough. Working in weighted spaces with a weight as in \eqref{7VII17.1}, we thus need to ensure  that the coefficient in front of the $r^{-2a}$ term vanishes. Equivalently, either 1) $b=(s+R+2-2\ell \omega)/4$ needs to be a non-positive integer, say $-K$ with $K\in \N$, or 2) $c-a$ should be a non-positive integer $-K$.
In case 1), this gives a polynomial solution and $\red{r^{-(s+n)/2}}$  behavior for $\linpsiell$.

The bottom line is that the weighted-Sobolev space condition above will be satisfied in case 1) if and only if
\begin{equation}
 \label{2II17.11}
\omega=\frac{n+ \sqrt{4 \ell^2 \Vpp + n^2}+2(k + 2 K)}{2 \ell},
\quad\text{for}\quad
K=0,1,2,3,\hdots
\end{equation}
In case 2), one is similarly led to an overall minus sign at the right-hand side of \eq{2II17.11}, resulting in the same value of $\omega^2$.

The solutions obtained in both cases differ only by a constant factor: a change of $\omega\to -\omega$ in \eqref{12II17.5} leads to the same one-dimensional family of solutions, as follows from the identity~\cite[15.8.1]{NIST:DLMF}
\begin{equation}
{}_2F_1(a,b,c,z)=(1-z)^{c-a-b}{}_2F_1(c-a,c-b,c,z)\,.
\end{equation}

Different $\lambdaell$ can lead to the same eigenvalue: the pair $(k, K)$ gives the same value of $\omega^2$ as $(k', K+\frac{k-k'}{2})$ (for even $k-k'$).
	
Keeping in mind that our construction works best for simple eigenvalues, we note that only the eigenvalue with $k=0$ ($\lambda_0=0$) and $K=0$ is simple.

However, if we restrict the whole construction above to the space of spherically symmetric eigenfunctions and spherically symmetric static metrics, then all eigenvalues will be simple in this space, leading to a countable family, parameterised by $K$, of one-parameter families of solutions, each parameterised by $\elambda$, of the full system of equations.

\subsection{Rotating solutions}
 \label{ss11II17.1}

In this section we let $0\ne k\in \N$ and consider frequencies $\omega$ near \eqref{2II17.11}.
Instead of \eqref{3II17.1} we assume that
\bel{11II17.1a}
 \scalarfield(\elambda, r,\btheta^j,\bvarphi) = \elambda e ^{i (\omega (\elambda) t-\emm \bvarphi)} \chi(\elambda, r,\btheta^j)
 \,,
\ee
where 
we denote by $(\btheta^j,\bvarphi)$, $\btheta^j\in[0,\pi)$, $\bvarphi\in [0,2\pi)$ the usual angular coordinates on $S^{n-1}$. For definiteness we assume that $-\mathring V^2 dr^2 + \mathring g$ is the anti-de Sitter metric, and that $\mathring \theta =0$, and that $\chi$ will be near a given solution of the eigenvalue problem.
This will be consistent with \eqref{matterequations2_periodicscalar} if we restrict ourselves to a class of metrics, potentials $V$ and forms $\theta$  which are invariant under $\partial_\bvarphi$:
$$
 \mcL_{\partial_\bvarphi} g = 0\,,
  \quad
  \mcL_{\partial_\bvarphi} V =0
  \,,\quad
  \mcL_{\partial_\bvarphi} \theta =0
  \,.
$$
In the space of functions $\scalarfield $ of the form \eq{11II17.1a} the eigenspaces corresponding to $k=|m|$,
$K=0$ in \eqref{2II17.11} are one-dimensional: Indeed, as already noted above, the pair $(k,K)$ leads to the same eigenvalue $\omega$ as $(k', K')$ if and only if $K'=K+(k-k')/2$. As $K$ is required to be non-negative this means that $k'<k$ if $K=0$. But the eigenfunctions of the Laplacian on $S^{n-1}$ take the form (cf., e.g., \cite{Higuchi,Higuchi2})
\[
Y_{\ell_1,\dots,\ell_{n-2},k}(\btheta^j, \bvarphi)=e^{i\ell_1 \bvarphi}Z_{\ell_2,\dots,\ell_{n-2},k}(\btheta^j)\,,
\]
where $|\ell_1|\leq\ell_2\leq\dots...\leq k$, and therefore there are no eigenfunctions which behave as $e^{-i m \bvarphi}$ for $k'<k=|m|$.

In dimension $n=3$ the eigenspace corresponding to $k=|m|+1$, \red{$K=0$} is also one-dimensional: In this case the eigenfunctions of the Laplacian are parameterized by only $2$ parameters, $k$ and $\ell_1$, and therefore there is at most one eigenfunction of the form \eqref{11II17.1a} for a fixed $k$. The only value $k'<k$ with eigenfunctions of the required form is $k'=k-1=|m|$, which does not give the same $\omega$ for any integer value of $K'$.

 All our arguments can be repeated in this setting, leading to solutions of the coupled equations with suitable $\omega$ for all $\elambda$ in \eq{11II17.1a} small enough.

\appendix

\renewcommand{\be}{\beta}

\section{Smooth dependence upon $(V,g)$\\ by A. Kriegl, P. Michor and A. Rainer}
 \label{KMR}

\newcommand{\Vdomain}{\red{\mathfrak{V}}}
\renewcommand{\al}{\alpha}
 \label{AS11VII17.1}

Let $M$ be a non-compact smooth manifold (the Cauchy surface). Let $\red{\mathring g}$ be a smooth background
Riemannian metric on $M$ with good properties (at least complete, or with bounded geometry, for example).
Let $\rh\in C^\infty(M,\mathbb R_{>0})$ be a fixed smooth positive function on $M$.

Let $E\to M$ be a tensor bundle like  $S^2T^*M$. Then let
$\Ga_{C^{k,\al}(\red{\mathring g})}(E)$ be the Banach space of all $C^{k,\al}$ sections  $f$ of $E$ (for each smooth curve $c:\mathbb R\to M$ the composition $f\o c$ is $C^k$ and its $k$-th derivative is locally H\"older of class $\al$) with norm (one of many equivalent conditions)
\begin{align}
 \label{19VII17.1}
\|f\|_{C^{k,\al}} =&\sup_{x\in M}\Big( |f(x)| + |\nabla^{\red{\mathring g}}f(x)|_{\red{\mathring g}} + \dots + |(\nabla^{\red{\mathring g}} )^kf(x)|_{\red{\mathring g}}\Big)
\\&
+ \sup_{0<\on{dist}^{\red{\mathring g}}(x,y)\le \ep} \frac{\|(\nabla^{\red{\mathring g}})^kf(x)-\on{Pt}_{y,x}((\nabla^{\red{\mathring g}})^kf(y))\|_{\red{\mathring g}}}
{\on{dist}^{\red{\mathring g}}(x,y)^\al}
 \,,
 \nonumber
\end{align}
where we used the geodesic distance on $M$ induced by $\red{\mathring g}$, and the fiber metric on
$\otimes^kT^*M\otimes E$ induced by $\red{\mathring g}$, and the parallel transport $\on{Pt}_{y,x}$ from $y$ to $x$ along the short geodesic from $x$ to $y$; here $\ep$ is smaller than the injectivity radius of $(M,\red{\mathring g})$.

Below we will meet mappings on $\Ga_{C^{k,\al}(\red{\mathring g})}(E)$ which are real analytic; since we are on a Banach space, these are given by convergent power series of bounded multilinear homogeneous expressions. See \cite[Sections 10 and 11]{KrieglMichor}.
These mappings will be visibly real analytic, since they will involve only differentiations, multiplications, and inversion of matrices.

Moreover we let
$\Ga_{C^{k,\al}_r(\red{\mathring g})}(E) = \{s: \rh^{-r}s \in \Ga_{C^{k,\al}(\red{\mathring g})}(E)\}$.
If $E$ is the trivial line bundle we just write $C^{k,\al}_r(\red{\mathring g})(M)$ instead of
$\Ga_{C^{k,\al}_r(\red{\mathring g})}(M\x \mathbb R)$.

Let $\red{\mathring V}$ be a smooth positive function on $M$ and let $\mathcal V$ be the space of all  functions $V\in C^{\infty}(M,\mathbb R)$ such that $V-\red{\mathring V}\in C^{k+1,\al}_1(\red{\mathring g})(M)$ and $V>0$ everywhere on $M$.
Then $\mathcal V$ is an open set in an affine space modelled on a Banach space.

Let $\mathcal M$ be the space of all Riemannian metrics $g$ on $M$ such that
$g-\red{\mathring g}\in \Ga_{C^{k+2,\al}_2(\red{\mathring g})}(S^2T^*M)$.
It then follows that also
$g\i-\red{\mathring g}\i\in \Ga_{C^{k+2,\al}_2(\red{\mathring g})}(S^2T^*M)$.
Note that 
$\mathcal M$
is an open set in an affine space modelled on a Banach
space of tensor fields on $M$.

\noindent\thetag{1}
\emph{
For each $g\in \mathcal M$ the volume density
$\on{vol}(g)$ satisfies
$ \on{vol}(g) = F(g)\on{vol}(\red{\mathring g})$
for a function $F(g)= \sqrt{\frac{\det(g_{ij})}{\det(\red{\mathring g}_{ij})}}\in C^{k+2,\al}(\red{\mathring g})(M)$
with $F(g)>0$ and $\frac1{F(g)}\in C^{k+2,\al}(\red{\mathring g})(M)$. Moreover, $F(g)$ visibly depends real
analytically on $g$.
}

\noindent\thetag{2} \emph{ For each $g\in \mathcal M$ the Hilbert space
$L^2(\rh^{2\delta} \on{vol}(g))$ is isomorphic (but not isometric) to $L^2(\rh^{2\delta} \on{vol}(\red{\mathring g}))$ via the
multiplication operator $F(g):L^2(\rh^{2\delta} \on{vol}(\red{\mathring g}))\to L^2(\rh^{2\delta} \on{vol}(g))$. }
So we may consider just
one Hilbert space $L^2(\rh^{2\delta} \on{vol}(\red{\mathring g}))$ with different inner products $\langle \al,\be\rangle_g =
\langle F(g)\i\al,F(g)\i\be\rangle_{\red{\mathring g}}$.

We pass now to the description of our assumptions:

\begin{assumptions}
  \label{A5VII17.1}
Let $U:= \mathcal V\x  \mathcal M$, an open subset in an affine space modelled on a Banach space.
For each $x  \in U$,
let $A(x)$ be an
\red{unbounded closed operator 
	 on $L^2(\rh^{2\delta} \on{vol}(g))$}, such that the
domains satisfy $\mathcal D(A(x)) = F(g)\mathcal D(A(\red{\mathring V},\red{\mathring g}))$.

\end{assumptions}

In the case of interest in this paper $x = (V,g)$.

 The
operators in this paper even have equality of all domains in $L^2(\rh^{2\delta} \on{vol}(\red{\mathring g}))$.

By replacing $A(x)$ with $F(g)\i A(x)F(g)$ we may assume that
$A$ is a map from an open subset $U:=\mathcal V\x\mathcal M$ in an affine space modelled on a Banach space
to the set of unbounded closed operators on some fixed Hilbert space $H:=L^2(\rh^{2\delta} \on{vol}(\red{\mathring g}))$ with
common domain $\Vdomain =\mathcal D(A(x))\subseteq H$.
Furthermore, we assume that $A$
is real analytic in the sense,
 that for all vectors $v\in \Vdomain $ and $w\in H$
the composite $x\mapsto \langle A(x)v,w\rangle$ is real analytic; see \cite[Section 10]{KrieglMichor} for more information. This weak definition suffices due to the real analytic uniform boundedness theorem \cite[11.12]{KrieglMichor}.

We emphasise that it is not assumed that the $A(x)$'s are self-adjoint  or with compact resolvent.

\begin{theorem}
 \label{TA1VIII17.1}
	Under the assumptions \ref{A5VII17.1}, let $\la(x_0)$ be a simple \red{isolated} eigenvalue of $A(x_0)$
	with eigenvector $v(x_0)\in \Vdomain $, where $x_0 \in U$ is fixed.

	Then one may extend $\la$ and $v$ to locally defined real analytic  mappings, such that
	$\la(x)$ is a simple \red{isolated}
eigenvalue of $A(x)$ with corresponding eigenvector $v(x)$ for \red{$x$} near $x_0$
  	in $U$.
	
	If $\la$ is real-valued and non-negative (e.g.\ $A(x)$ is symmetric for some inner product possibly different from
  	the given one), then the non-negative root $\om(x)=\sqrt{\la(x)}$ is locally Lipschitz in \red{$x$}.
	On the subset of those $x$ for
	which $\la(x)>0$ the function $\om(x)$ depends real analytically  on $x$.
\end{theorem}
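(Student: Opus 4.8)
The plan is to run Kato--Riesz analytic perturbation theory, but over the infinite-dimensional parameter manifold $U$ and starting only from the weak real-analyticity built into Assumptions~\ref{A5VII17.1}. The first step is to upgrade that hypothesis. Equipping the common domain $\mathfrak{V}=\mathcal D(A(x))$ with the graph norm of $A_0:=A(x_0)$, the assumption that $x\mapsto\langle A(x)v,w\rangle$ is real analytic for every $v\in\mathfrak{V}$, $w\in H$ says exactly that $A$ is weakly real analytic as a map $U\to\mathcal L(\mathfrak{V},H)$. By the real analytic uniform boundedness theorem \cite[11.12]{KrieglMichor} this forces $A\colon U\to\mathcal L(\mathfrak{V},H)$ to be real analytic in the strong (convenient) sense; in particular $x\mapsto A(x)-A_0\in\mathcal L(\mathfrak{V},H)$ is real analytic with $A(x_0)-A_0=0$.

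Next I would build the resolvent and the spectral projection. Since $\lambda(x_0)$ is isolated, fix a small positively oriented circle $\Gamma\subset\rho(A_0)$ enclosing $\lambda(x_0)$ and no other point of $\sigma(A_0)$. For $z\in\rho(A_0)$ factor
\[
A(x)-z=(A_0-z)\big(\mathrm{Id}_{\mathfrak{V}}+(A_0-z)^{-1}(A(x)-A_0)\big),
\]
where $(A_0-z)^{-1}\colon H\to\mathfrak{V}$ and $A(x)-A_0\colon\mathfrak{V}\to H$ are bounded, so the bracketed term is an endomorphism of $\mathfrak{V}$ depending real analytically on $x$; for $z$ on the compact set $\Gamma$ and $x$ in a small enough neighborhood of $x_0$ its distance to $\mathrm{Id}$ is $<1$, uniformly. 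Since inversion of bounded operators is real analytic,
\[
(A(x)-z)^{-1}=\big(\mathrm{Id}_{\mathfrak{V}}+(A_0-z)^{-1}(A(x)-A_0)\big)^{-1}(A_0-z)^{-1}
\]
exists for all $z\in\Gamma$, is real analytic in $x$ and holomorphic in $z$. Hence $P(x)=\frac{1}{2\pi i}\oint_\Gamma(A(x)-z)^{-1}\,dz$ is real analytic into $\mathcal L(H,\mathfrak{V})$ (integration of a real analytic operator family over the fixed contour stays in the real analytic category). Each $P(x)$ is the Riesz projection onto the part of $\sigma(A(x))$ inside $\Gamma$, and a norm-continuous family of projections has locally constant rank, so $\operatorname{rank}P(x)=\operatorname{rank}P(x_0)=1$ near $x_0$. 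Thus $A(x)$ has a single, simple, isolated eigenvalue $\lambda(x)$ inside $\Gamma$, with one-dimensional eigenspace $\operatorname{Range}P(x)\subset\mathfrak{V}$. I set $v(x):=P(x)v(x_0)$, which is real analytic and nonzero near $x_0$ (as $P(x_0)v(x_0)=v(x_0)$), and
\[
\lambda(x):=\frac{\langle A(x)v(x),\,v(x_0)\rangle}{\langle v(x),\,v(x_0)\rangle},
\]
a quotient of real analytic scalar functions with nonvanishing denominator, hence real analytic; this yields the first two assertions.

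For the square root, assume $\lambda$ is real-valued and $\lambda\ge0$. On the open set $\{\lambda>0\}$ the function $\omega=\sqrt{\lambda}$ is the composite of the real analytic $\lambda$ with $\sqrt{\,\cdot\,}\colon(0,\infty)\to\mathbb R$, so it is real analytic there. For the Lipschitz bound across $\{\lambda=0\}$ I would use Glaeser's inequality. Being real analytic into $\mathbb R$, $\lambda$ is $C^\infty$ with locally bounded second derivative; fix a convex neighborhood on which $\sup\|D^2\lambda\|\le M$. Applying the elementary one-variable estimate $g'(t)^2\le2\|g''\|_\infty\,g(t)$ (minimize the nonnegative Taylor bound $g(t+s)\le g(t)+g'(t)s+\tfrac12\|g''\|_\infty s^2$ over $s$) to $g(t)=\lambda(x+th)$ in every direction $h$ gives $\|D\lambda(x)\|^2\le 2M\,\lambda(x)$ on the neighborhood. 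Where $\lambda(x)>0$ this yields $\|D\omega(x)\|=\|D\lambda(x)\|/(2\sqrt{\lambda(x)})\le\sqrt{M/2}$. Restricting $\omega$ to a segment between two points $x,y$ of the neighborhood, $t\mapsto\sqrt{g(t)}$ is continuous, the zeros of the real analytic $g$ in $[0,1]$ are isolated, and off them the $t$-derivative is bounded by $\sqrt{M/2}\,\|y-x\|$; continuity across the zeros then gives $|\omega(y)-\omega(x)|\le\sqrt{M/2}\,\|y-x\|$, i.e. local Lipschitz continuity.

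The main obstacle is not the Riesz bookkeeping per se but making it legitimate in the present infinite-dimensional, unbounded, non-self-adjoint setting: one must genuinely invoke the real analytic uniform boundedness theorem to pass from the weak hypothesis to strong real-analyticity of $A\colon U\to\mathcal L(\mathfrak{V},H)$, and then check that the Neumann-series resolvent and its contour integral remain real analytic in the convenient-calculus sense, with careful control of the common graph-norm domain $\mathfrak{V}$ and of joint dependence on $(x,z)$. The delicate analytic point, by contrast, is the Lipschitz estimate for $\omega$ on the locus $\{\lambda=0\}$, where $\sqrt{\,\cdot\,}$ is no longer smooth and the Glaeser inequality is precisely what converts the blow-up-free gradient bound $\|D\lambda\|^2\le 2M\lambda$ into a uniform Lipschitz constant.
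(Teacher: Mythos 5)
Your argument follows the paper's proof essentially step for step: upgrade the weak real-analyticity hypothesis to real analyticity of $x\mapsto A(x)\in L\big((\mathfrak V,\|\cdot\|_{x_0}),H\big)$ via the real analytic uniform boundedness theorem, show the resolvent set is stable under perturbation by factoring $A(x)-z$ through $A(x_0)-z$ and inverting a near-identity operator, form the Riesz projection $P(x)$ by a contour integral over a circle isolating $\lambda(x_0)$, use local constancy of the rank of a continuous family of finite-rank projections to see that exactly one simple eigenvalue survives inside the contour, and set $v(x)=P(x)v(x_0)$ with $\lambda(x)$ recovered as a quotient of real analytic scalar functions (your pairing against $v(x_0)$ instead of $v(x)$ is an immaterial variant). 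The one genuine difference is the locally Lipschitz claim for $\omega=\sqrt{\lambda}$ across the zero set of $\lambda$: the paper simply cites the result of Kriegl--Michor--Rainer \cite{KMR}, whereas you supply a self-contained proof via the Glaeser inequality $\|D\lambda\|^2\le 2M\lambda$ (with $M$ a local bound on $D^2\lambda$), giving a uniform gradient bound for $\omega$ off the zeros and then integrating along segments, using that the zeros of the real analytic restriction $t\mapsto\lambda(x+t(y-x))$ are isolated unless it vanishes identically. This is a correct and pleasantly elementary substitute for the citation; the only point to tidy is that the optimization over $s$ in the Glaeser step requires the minimizing point to stay in the convex neighborhood, so one should derive the inequality on a slightly larger ball (or accept a constant depending on the radius), a standard and harmless adjustment.
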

It is in general not possible to have a differentiable function $\om(x)$ such that $\om(x)^2 = \la(x)$,
see e.g.\ \cite[5.2]{AKLM98}.

\begin{proof}
The following argument is adapted from \cite[7.4]{AKLM98} and \cite[Proof of resolvent lemma]{KMRp}:
For each $x\in U$ consider the norm $\|u\|_x^2:=\|u\|_H^2+\|A(x)u\|_H^2$ on
$\Vdomain $. Since $A(x)$ is closed, $(\Vdomain ,\|\quad\|_x)$ is also a
Hilbert space with inner product
$\langle u,v\rangle_x:=\langle u,v\rangle_H+\langle A(x)u,A(x)v\rangle_H$.
Then $U\ni x\mapsto \langle u,v\rangle_x$ is real analytic for fixed $u,v\in \Vdomain $,
and by the multilinear uniform boundedness principle
\cite[5.18 and 11.14]{KrieglMichor}, the mapping
$x\mapsto \langle \;,\;\rangle_x$ is real analytic into the space of
bounded bilinear forms on $(\Vdomain , \|\quad\|_{x_0})$.
By the exponential law
\cite[3.12 and 11.18]{KrieglMichor} the mapping
$(x,u)\mapsto \|u\|^2_x$ is real analytic from
$U \times (\mathfrak V,\|~\|_{x_0})$ to $\mathbb R$
for each fixed $x_0$.
Thus, all Hilbert norms $\|\quad\|_x$ are equivalent: for  $B\subset U$ bounded,
$\{\|u\|_x:x\in B,\|u\|_{x_0}\le 1 \}$ is bounded by $C_{B,x_0}$ in
$\mathbb R$, so $\|u\|_x\le C_{B,x_0}\|u\|_{x_0}$ for all $x\in B$. Moreover,
each $A(x)$ is a globally defined operator $(\mathfrak V,\|~\|_{x_0}) \to H$
with closed graph and is thus bounded, and by using again the
(multi)linear uniform boundedness principle \cite[5.18 and 11.14]{KrieglMichor}
as above we see
that $x\mapsto A(x)$ is
real analytic $U \to L((\mathfrak V,\|~\|_{x_0}),H)$.
	
	We consider the global resolvent set
	\[
	\mathcal R = \{(x,\mu)\in U\x \mathbb C: A(x)-\mu:
	 (\Vdomain,\|\quad\|_{x_0}) \to H\text{ is invertible}\}
	\]
	which is an open subset of $U\x \mathbb C$, since
  $(A(x)-\mu)\o (A(x_0)-\mu_0)\i\in L(H)$ and equals  $\on{Id}$ for  $(x,\mu) = (x_0,\mu_0)$.
	By assumption, $\la(x_0)$ is a simple \red{isolated} eigenvalue of $A(x_0)$ with eigenvector $v(x_0)$.
	We choose a smooth 	positively oriented curve $\ga$ in $\mathbb C$ which contains only $\la(x_0)$ in its interior
	and all other eigenvalues of $A(x_0)$ in the exterior; in particular,
	$\{x_0\}\x \ga \subset \mathcal R$. \red{Since $\ga\subset \mathbb C$ is compact, we may cover $\{x_0\}\x \ga$ by finitely many open sets
	of the form  $W_i\x \tilde W_i$ contained in $\mathcal R$; for $U_1 = \bigcap W_i$ we then have $U_1\x \ga \subset \mathcal R$ where $U_1$ is an open neighborhood of $x_0$ in $U$.
	By \cite[III.6.17]{Kato76}, for $\red{x \in U}_1$ the spectrum of $A(x)$ is separated into the two parts contained in the interior} and in
  	the exterior of $\ga$ and the resolvent integral
	\[
	P(x) = -\frac{1}{2\pi i}\int_\ga (A(x) - \mu)\i \,d\mu: H\to \Vdomain  \subseteq H
	\]
	is a projection operator onto the sum of all generalized eigenspaces of all eigenvalues of $A(x)$ in the interior of $\ga$, for $\red{x \in U}_1$.
	We now argue as in the proof of \cite[7.8, Claim 1]{AKLM98} (see also \cite[50.16, Claim 1]{KrieglMichor}) as follows:
	By replacing $A(x)$ by $A(x)-z_0$ if necessary we may assume that 0 is not in the interior of $\ga$.
	Since $U_1\ni x\mapsto P(x)$ is a smooth (even real analytic) mapping
	into the space of bounded projections in $\red{L(H)}$ with finite dimensional ranges,
	the rank of $P(x)$ cannot fall locally, and it cannot increase locally since the distance in
	$L(H)$ of $P(x)$ to the subset of operators of rank 1
	is continuous in \red{$x$} and is either 0 or $\ge 1$.
	See also \cite[I.\S4.6 and I.6.36] {Kato76}.

	So we conclude that  for $x$ in a (possibly smaller) open set $U_1$
	there is only one (counted with multiplicity) eigenvalue (denoted $\la(x)$) of $A(x)$
	in the interior of $\ga$ and hence $P(x)$ is a projection on its eigenspace.
	See also \cite[IV.\S3.4-5]{Kato76}.
	
	
	Then $v(x): = P(x)v(x_0)$ is an eigenvector for $A(x)$ depending real analytically on $x$ near $x_0$. The
	corresponding eigenvalue is also real analytic, since
	\[
	\la(x) = \frac{\langle A(x)v(x),v(x)\rangle_{\red{\mathring g}}}{\|v(x)\|_{\red{\mathring g}}^2}\,.
	\]
Near positive $\lambda(x_0)$'s the square root is obviously also real analytic.
	If the smooth eigenvalue $\la(x)$ is always non-negative, then the non-negative square root $\om(x)=\sqrt{\la(x)}$
	is locally Lipschitz in $x$ by \cite{KMR}.
\end{proof}

\begin{Remark}\label{R3VIII17.1}
The assumption that $\lambda(x_0)$ is a simple eigenvalue of $A(x_0)$ is quite essential in the
above theorem. Near eigenvalues with higher multiplicity the situation becomes much more
difficult. Real analytic curves of self-adjoint or normal unbounded operators with compact resolvent and
common domain of definition admit real analytic choices of their eigenvalues and eigenvectors.
However, if the parameter space is at least $2$-dimensional,  examples can be given with no
differentiable choice for self-adjoint operators and no continuous choice for normal operators.
If the parameter space is finite dimensional, then, locally, the eigenvalues and
eigenvectors can be chosen real analytically after blowing up the parameter space.
Even less can be said if the operators depend only smoothly on a parameter and distinct eigenvalues
have infinite order of contact. Without normality even real analytic curves of diagonalisable matrices
need not admit smooth choices of  the eigenvalues. All this can be found in
\cite{RainerN}
and the references therein. For the optimal (Sobolev) regularity of the eigenvalues of smooth curves of
arbitrary quadratic matrices see~\cite{ParusinskiRainer15}.

\end{Remark}

 \bigskip

\noindent{\sc Acknowledgements} The research of PTC was supported in
part by the Austrian Research Fund (FWF), Project  P29517-N27,  by the Polish National Center of Science (NCN) under grant 2016/21/B/ST1/00940 and by the Erwin Schr\"odinger Institute. Armin Rainer was supported by the FWF-Project P~26735-N25. Paul Klinger was supported by a uni:docs grant of the University of Vienna. We are grateful to  Gilles Carron and Luc Nguyen for useful comments and discussions.

\def\polhk#1{\setbox0=\hbox{#1}{\ooalign{\hidewidth
  \lower1.5ex\hbox{`}\hidewidth\crcr\unhbox0}}} \def\cprime{$'$}
  \def\cprime{$'$}
\providecommand{\bysame}{\leavevmode\hbox to3em{\hrulefill}\thinspace}
\providecommand{\MR}{\relax\ifhmode\unskip\space\fi MR }
\providecommand{\MRhref}[2]{%
  \href{http://www.ams.org/mathscinet-getitem?mr=#1}{#2}
}
\providecommand{\href}[2]{#2}


\begin{thebibliography}{10}

\bibitem{AKLM98}
D.~Alekseevsky, A.~Kriegl, P.W. Michor, and M.~Losik, \emph{Choosing roots of
  polynomials smoothly}, Israel Jour.\ Math. \textbf{105} (1998), 203--233.
  \MR{1639759}

\bibitem{mand2}
M.T. Anderson, \emph{{Einstein} metrics with prescribed conformal infinity on
  $4$-manifolds}, Geom. Funct. Anal. \textbf{18} (2001), 305--366,
  arXiv:math.DG/0105243. \MR{2421542}

\bibitem{mand1}
\bysame, \emph{Boundary regularity, uniqueness and non-uniqueness for {AH
  Einstein} metrics on $4$-manifolds}, Adv.\ in Math. \textbf{179} (2003),
  205--249, arXiv:math.DG/0104171. \MR{2010802}

\bibitem{ACD}
M.T. Anderson, P.T. Chru\'{s}ciel, and E.~Delay, \emph{Non-trivial, static,
  geodesically complete vacuum space-times with a negative cosmological
  constant}, Jour.\ High Energy Phys. \textbf{10} (2002), 063, 22 pp.,
  arXiv:gr-qc/0211006. \MR{1951922}

\bibitem{ACD2}
\bysame, \emph{Non-trivial, static, geodesically complete space-times with a
  negative cosmological constant. {II}. {$n\geq5$}}, AdS/CFT correspondence:
  Einstein metrics and their conformal boundaries, IRMA Lect. Math. Theor.
  Phys., vol.~8, Eur. Math. Soc., Z\"urich, 2005, arXiv:gr-qc/0401081,
  pp.~165--204. \MR{MR2160871}

\bibitem{andersson:elliptic}
L.~Andersson, \emph{Elliptic systems on manifolds with asymptotically negative
  curvature}, Indiana Univ.\ Math.\ Jour. \textbf{42} (1993), 1359--1388.
  \MR{1266098}

\bibitem{AstefaneseiRadu}
D.~Astefanesei and E.~Radu, \emph{Boson stars with negative cosmological
  constant}, Nucl.\ Phys.\ B \textbf{665} (2003), 594--622,
  arXiv:gr-qc/0309131. \MR{2000918}

\bibitem{besse:einstein}
A.L. Besse, \emph{Einstein manifolds}, Ergebnisse d. Math. 3. Folge, vol.~10,
  Springer, Berlin, 1987.

\bibitem{BizonWasserman}
P.~Bizo\'n and A.~Wasserman, \emph{On existence of mini-boson stars}, Commun.\
  Math.\ Phys. \textbf{215} (2000), 357--373, arXiv:gr-qc/0002034. \MR{1799851}

\bibitem{Breitenlohner1982}
  P.~Breitenlohner and D.Z.~Freedman, \emph{Stability in gauged extended
  	supergravity}, Ann.\ Phys. \textbf{144} (1982), no.~2, 249 -- 281.

\bibitem{BrihayeHarmannRiedel}
Y.~Brihaye, B.~Hartmann, and J.~Riedel, \emph{{Self-interacting boson stars
  with a single Killing vector field in anti–de Sitter space-time}}, Phys.\
  Rev. \textbf{D92} (2015), 044049, arXiv:1404.1874 [gr-qc].

\bibitem{ChodoshSR}
O.~Chodosh and Y.~Shlapentokh-Rothman, \emph{{Time-Periodic
  Einstein-Klein-Gordon Bifurcations of Kerr}},  (2015), arXiv:1510.08025
  [gr-qc].

\bibitem{ChDelayStationary}
P.T. Chru\'{s}ciel and E.~Delay, \emph{Non-singular, vacuum, stationary
  space-times with a negative cosmological constant}, Ann. Henri Poincar\'e
  \textbf{8} (2007), 219--239. \MR{MR2314449}

\bibitem{ChDelayEM}
P.T. Chru{\'s}ciel and E.~Delay, \emph{{Non-singular spacetimes with a negative
  cosmological constant: II. Static solutions of the Einstein-Maxwell
  equations}}, Lett.\ Math.\ Phys. (2017), in press, arXiv:1612.00281
  [math.DG], \url{doi:10.1007/s11005-017-0955-x}.

\bibitem{ChDelayKlinger}
P.T. Chru{\'s}ciel, E.~Delay, and P.~Klinger, \emph{{Non-singular spacetimes
  with a negative cosmological constant: III. Stationary solutions with matter
  fields}}, Phys.\ Rev. \textbf{D95} (2017), 104039, arXiv:1701.03718 [gr-qc].

\bibitem{DiasHorowitzSantos}
O.J.C. Dias, G.T. Horowitz, and J.E. Santos, \emph{{Black holes with only one
  Killing field}}, Jour. High Energy Phys. (2011), 115, 43, arXiv:1105.4167
  [hep-th]. \MR{2875937}

\bibitem{NIST:DLMF}
\emph{{\it NIST Digital Library of Mathematical Functions}},
  http://dlmf.nist.gov/, Release 1.0.14 of 2016-12-21, F.~W.~J. Olver, A.~B.
  {Olde Daalhuis}, D.~W. Lozier, B.~I. Schneider, R.~F. Boisvert, C.~W. Clark,
  B.~R. Miller and B.~V. Saunders, eds.

\bibitem{HerdeiroRaduKerr}
C.A.R. Herdeiro and E.~Radu, \emph{{Kerr black holes with scalar hair}}, Phys.\
  Rev.\ Lett. \textbf{112} (2014), 221101, arXiv:1403.2757 [gr-qc].

\bibitem{Higuchi}
A.~Higuchi, \emph{{Symmetric tensor spherical harmonics on the {N}-sphere and
  their application to the de Sitter group {$SO(N,1)$}}}, Jour.\ Math.\ Phys.
  \textbf{28} (1987), 1553--1566.

\bibitem{Higuchi2}
\bysame, \emph{{Erratum: Symmetric tensor spherical harmonics on the {N}-sphere
  and their application to the de Sitter group {$SO(N,1)$}}}, Jour.\ Math.\
  Phys. \textbf{43} (2002), 6385.

\bibitem{IshibashiWaldIII}
A.~Ishibashi and R.~M. Wald, \emph{Dynamics in non-globally hyperbolic static
  spacetimes {III: Anti-de Sitter} spacetime}, Class. Quantum Grav. \textbf{21}
  (2004), 2981.

\bibitem{Kato76}
T.~Kato, \emph{Perturbation theory for linear operators}, second ed.,
  Springer-Verlag, Berlin-New York, 1976, Grundlehren der Mathematischen
  Wissenschaften, Band 132. \MR{0407617}

\bibitem{Kaup}
D.J. Kaup, \emph{{Klein-Gordon Geon}}, Phys.\ Rev. \textbf{172} (1968),
  1331--1342.

\bibitem{KMRp}
A.~Kriegl, P.~W. Michor, and A.~Rainer, \emph{{D}enjoy--{C}arleman
  differentiable perturbation of polynomials and unbounded operators}, Integral
  Equations and Operator Theory \textbf{71} (2011), 407--416.

\bibitem{KrieglMichor}
A.~Kriegl and P.W. Michor, \emph{The convenient setting of global analysis},
  Mathematical Surveys and Monographs, vol.~53, American Mathematical Society,
  Providence, RI, 1997. \MR{98i:58015}

\bibitem{KrieglMichor2}
\bysame, \emph{Differentiable perturbation of unbounded operators}, Math.\ Ann.
  \textbf{327} (2003), 191--201. \MR{2006008}

\bibitem{KMR}
A.~Kriegl, P.W. Michor, and A.~Rainer, \emph{Many parameter {H}\"older
  perturbation of unbounded operators}, Math.\ Ann. \textbf{353} (2012),
  519--522. \MR{2915546}

\bibitem{kristensson2010}
G.~Kristensson, \emph{Second order differential equations: Special functions
  and their classification}, Springer New York, 2010.

\bibitem{Lee:fredholm}
J.M. Lee, \emph{Fredholm operators and {E}instein metrics on conformally
  compact manifolds}, Mem. Amer. Math. Soc. \textbf{183} (2006), vi+83,
  arXiv:math.DG/0105046. \MR{MR2252687}

\bibitem{LieblingPalenzuela}
S.L. Liebling and C.~Palenzuela, \emph{{Dynamical Boson Stars}}, Living Rev.\
  Rel. \textbf{15} (2012), 6, arXiv:1202.5809 [gr-qc].

 \bibitem{Mezincescu1985}
  L.~Mezincescu and P.K.~Townsend, \emph{Stability at a local maximum in higher
  	dimensional anti-deSitter space and applications to supergravity}, Ann. Phys.
   \textbf{160} (1985), no.~2, 406 -- 419.

\bibitem{ParusinskiRainer15}
A.~Parusi{{\'n}}ski and A.~Rainer, \emph{Optimal {S}obolev regularity of roots
  of polynomials}, accepted for publication in Ann.\ Sci.\ \'Ec.\ Norm.\
  Sup\'er.\ (4), ar{X}iv:1506.01512.

\bibitem{RainerN}
A.~Rainer, \emph{Perturbation theory for normal operators}, Trans.\ Amer.\
  Math.\ Soc. \textbf{365} (2013), 5545--5577.

\end{thebibliography}
\end{document}